\DeclareMathAlphabet{\pazocal}{OMS}{zplm}{m}{n}
\newtheorem{theorem}{Theorem}
\newtheorem{definition}{Definition}
\journal{Mechanical Systems and Signal Processing}
\begin{document}
\begin{frontmatter}

\title{PhyMDAN: Physics-informed Knowledge Transfer between Buildings for Seismic Damage Diagnosis through Adversarial Learning}

\author[label1]{Susu Xu\corref{cor1}}
\address[label1]{Civil and Environmental Engineering, Carnegie Mellon University, Pittsburgh, PA 15213, USA}
\cortext[cor1]{Corresponding author}
\ead{susux@alumni.cmu.edu}

\address[label2]{Civil and Environmental Engineering, Stanford University, Stanford, CA 94305, USA}

\author[label2]{Hae Young Noh}
\ead{noh@stanford.edu}

\begin{abstract}
    Automated structural damage diagnosis after earthquakes is important for improving efficiency of disaster response and city rehabilitation. In conventional data-driven frameworks which use machine learning or statistical models, structural damage diagnosis models are often constructed using supervised learning. Supervised learning requires historical structural response data and corresponding damage states (i.e., labels) for each building to learn the building-specific damage diagnosis model. However, in post-earthquake scenarios, historical data with labels are often not available for many buildings in the affected area. This makes it difficult to construct a damage diagnosis model. Further, directly using the historical data from other buildings to construct a damage diagnosis model for the target building would lead to inaccurate results. This is because each building has unique physical properties and thus unique data distribution. 

To this end, we introduce a new framework, Physics-Informed Multi-source Domain Adversarial Networks (PhyMDAN), to transfer the model learned from other buildings to diagnose structural damage states in the target building without any labels. This framework is based on an adversarial domain adaptation approach that extracts domain-invariant feature representations of data from different buildings. The feature extraction function is trained in an adversarial way, which ensures that extracted feature distributions are robust to variations of structural properties. The feature extraction function is simultaneously jointly trained with damage prediction function to ensure extracted features being predictive for structural damage states. With extracted domain-invariant feature representations, data distributions become consistent across different buildings. We evaluate our framework on both numerical simulation and field data collected from multiple building structures. The results show up to $90.13\%$ damage detection accuracy and $84.47\%$ damage quantification accuracy on simulation data, and up to $100\%$ damage detection accuracy and $69.93\%$ 5-class damage quantification accuracy when transferring from numerical simulation data to real-world experimental data, which outperforms the state-of-the-art benchmark methods. 
\end{abstract}

\begin{keyword}
Structural damage diagnosis \sep  Domain Adaptation \sep Transfer Learning\sep Statistical signal processing
\end{keyword}

\end{frontmatter}
\section{Introduction}
\label{sec:intro}
Accurate and timely structural damage diagnosis is important to ensure human safety and assist the process of city reconstruction in post-earthquake scenarios~\cite{matsui2012structural}. After an earthquake, building damage diagnosis helps people select safe shelters and mitigates secondary injuries from damaged buildings. Diagnosing the damage severity of different buildings provides detailed information for decision-making, including the scale of city devastation and repair/demolishing plans of damaged structures, and thus facilitates city reconstruction~\cite{foltz2004estimating}.

In recent years, many data-driven earthquake-induced building damage diagnosis methods based on structural vibration sensing have been developed and received much attention~\cite{young2011use,noh2012development}. By combining advanced machine learning techniques in a supervised fashion, these intelligent data-driven methods utilize historical structural response data and corresponding true damage state (label) to train building-specific models for structural damage diagnosis~\cite{hwang2018nonmodel,pakzad2009statistical,goulet2015data,xu2017information}. The data-driven methods can be extended to various application scenarios as they get rid of the requirements of prior knowledge about building structures. In this paper, we focus on two aspects of damage diagnosis: whether there exists damage (damage detection) and the severity of the damage (damage quantification). These methods are well-suited to modeling the non-linear properties of damage diagnosis, and perform well when there is massive historical data collected from the same building of interest.

However, an extensive collection of historical data, especially with the true labels, is difficult and expensive, if not impossible, in real-world practices. 
\textcolor{black}{The ``true labels" refers to a collection of labels for data points (e.g., damage states in our scenario)}.
One of the common methods of collecting these labels is to conduct visual inspections by trained human experts, which is labour-intensive, time-consuming, and dangerous. 
\textcolor{black}{In recent years,  researchers have designed and instrumented different types of sensors to directly measure the structural damage, but these sensors are either expensive to deploy or have strict requirements of operation conditions. For example, story drift ratio (SDR), defined as the relative translational displacement between adjacent floors, is often employed to indicate structural damage states~\cite{young2011use,ghobarah2001performance,vamvatsikos2002incremental}. However, sensors for directly sensing SDR, such as a linear variable differential transducer (LVDT), laser scanning, and augmented reality visualization,  require with a wire diagonally strung across the structure of interest, involve expensive sensors, or need a human operator~\cite{hou2018monitoring, skolnik2010critical, skolnik2008instrumentation, dai2011photogrammetry, mccallen2017laser,jennings2014retrofit,pang2009simplified}. }
Meanwhile, even though we are able to collect structural damage labels on some buildings, there are still a limited number of labeled data available due to the rarity of earthquakes. Limited quantity or lack of labels makes it difficult to train a damage diagnosis model sufficiently, which finally reduces damage diagnosis accuracy. 

Meanwhile, directly adopting the diagnosis models from other buildings to diagnosing the building of interest often results in low performance.  Many existing supervised learning methods are developed under the assumption that the data used for constructing a model (training data) have the same joint distribution of the inputs and labels as the data to be predicted based on the model (test data)~\cite{pan2010survey}. \textcolor{black}{However, in practices, as Figure~\ref{fig:transfer_idea} shows, different buildings often have distinct characteristics such as geometries, material properties, and foundation conditions. Their reactions to different earthquake excitation also vary significantly. Therefore, their data distributions are different from one another.} Directly adopting models learned from other buildings to diagnose new buildings violates the aforementioned assumption. \textcolor{black}{As the assumption is violated, the adopted models will fail to generalize to diagnose damage for different buildings, which significantly constrains the wide applications of data-driven damage diagnosis methods in post-earthquake scenarios.} \textcolor{black}{Traditionally, a straightforward way to address the inconsistent data distributions is to fine-tune an adopted/pre-trained model learnt from other buildings on the dataset collected from the building of interest~\cite{azimi2020structural,jang2019deep}. However, this method often requires large amounts of labeled data from the building of interest, which is not practical.}


To this end, the machine learning community has introduced domain adaptation techniques to address these knowledge transfer problems~\cite{pan2010survey, zhang2013domain,haeusser2017associative,pan2011domain,ghifary2016deep}. \textcolor{black}{In this paper, we denote one building structure as a single domain.} We also denote the buildings from which labeled data are collected and the knowledge are learned as ``Source Domain," and the new building of interest without any labeled historical data as ``Target Domain". \textcolor{black}{Domain adaptation is a subcategory of transfer learning, developed to adapt a collection of source domains for transferring knowledge to one or more target domains and improve the learning performance in target domains.}
Previous domain adaptation studies fall into several distinct categories such as instance-based~\cite{zhang2013domain} and feature representation-based~\cite{ghifary2016deep,haeusser2017associative,pan2011domain}. \textcolor{black}{Instance-based approach re-weights/sub-samples the source samples to match the data distributions of the source and target domains. However, in practice, the distribution difference between the source and target domains is often too complex to be reduced through direct re-weighting/sub-sampling in the original feature space~\cite{weiss2016survey}. To this end, a more common method is feature representation-based domain adaptation. This approach aims to find an optimal projection from the original feature representation space to a new space where the data distributions of the source and target domain can be best matched.} In this new feature space with consistent training and test data distributions, we can directly adopt the model learned from source domains to diagnose the damage state of the target building. A successful feature representation-based domain adaptation ensures the extracted feature of different buildings to be ``domain-invariant" and ``discriminative".  \textit{\textbf{``Domain-invariant"}} features have consistent distributions across different domains. Meanwhile, the extracted feature representations need to be \textit{\textbf{``discriminative"}} for structural damages to ensure the damage diagnosis accuracy on the source buildings. 

However, there are still challenges to knowledge transfer from source buildings to the target building for earthquake-induced structural damage diagnosis.
\emph{First}, there exists a trade-off between domain-invariance and discriminativeness of domain projection. On the one hand, the damage-predictive information is coupled with earthquake-induced structural responses and environmental noises, which vary with buildings, meaning that they are domain-variant. Simply forcing the domain-invariance of feature representations would easily eliminate the damage-predictive information contained in the raw untransformed data, which makes the extracted features poor predictors for structural damage. On the other hand, if we focus on extracting discriminative features regardless of underlying distribution changes across different buildings, it would be difficult to ensure the domain-invariance of extracted features. The most extreme case is to directly use raw data, which contains as much information as possible for damage prediction but makes it difficult to transfer knowledge across buildings. Therefore, the trade-offs between domain-invariance and discriminativeness in the earthquake-induced building damage diagnosis problem need to be investigated and resolved.

\emph{Second}, the changes of distributions between source buildings and the target building are too complex to be explicitly modeled, which makes it difficult to best extract domain-invariant feature representations. The joint distribution of the inputs and labels of each building is related to the building's structural properties, i.e., the functional mapping from building properties to the structural damages given earthquake excitation. This distribution varies with building according to different soil types, foundations, complex structural and non-structural components, and other influence factors.
With limited prior knowledge of the influences, it is difficult to directly model the damage patterns of each building accurately.

\emph{Last but not least}, the data from different source buildings may induce different levels of biases to the learned model, causing challenges to integrating the knowledge from multiple source buildings. Due to the aforementioned problem of costly label collection, each source building usually has a limited amount of data, and thus provides limited information about earthquake-induced structural damage patterns. One possible solution is to integrate and transfer the knowledge learned from multiple source buildings, 
which is referred to ``multiple source domain adaptation". Most conventional multiple source domain adaptation methods assume that all of the source domains have the equal importance to provide equally sufficient information to learn domain-invariant features~\cite{hoffman2017multiple}. \textcolor{black}{However, in real-world practices, different source buildings may have heterogeneous physical properties (heterogeneous source buildings), resulting in distinct data distributions across source domains. Meanwhile, the size of source domains also varies, inducing distinct levels of bias in the statistical estimations of source data distributions.} Treating all source buildings with the same importance ignores the source domain differences, and thus reduce the performance of knowledge transfer. \textcolor{black}{Some previous studies address this problem through measuring how closely related each source data is to target data. These methods either require labeled target data or are based on strict smoothness assumptions of the conditional and marginal data distributions~\cite{sun2011two, chattopadhyay2012multisource,seah2012combating, ge2014handling}. However, since the geometry of a high-dimensional data manifold is complex to estimate, it is inherently difficult to find an effective and robust measurement of relatedness from limited data ~\cite{weiss2016survey}.} 

To address these challenges, we introduce a new Physics-Informed Multi-source Domain Adversarial Network, (PhyMDAN), which transfers the knowledge learned from multiple different source buildings to help diagnose the story-wise damage conditions of the target building structure without any labeled data. This framework integrates a feature extractor, a domain discriminator, and a damage predictor with deep neural network architectures. The feature extractor aims to extract feature representations which are domain-invariant and discriminative. The domain discriminator regularizes the domain-invariance of extracted features. The damage predictor ensures the discriminativeness of extracted features. To find an optimal trade-off between domain-invariance and discriminativeness, our framework jointly optimizes the feature extractor, domain discriminator and damage predictor. The joint optimization is formulated as a minimax problem. By finding the saddle point, the algorithm converges to the near-optimal trade-off between domain-invariance and discriminativeness.
To learn the underlying distribution changes across buildings without any prior knowledge, we train the feature extractor and the domain discriminator in an adversarial way, where the feature extractor best learns the underlying domain-invariant representations such that the domain discriminator hardly distinguishes the differences between domains. The extracted domain-invariant features enable a better understanding of physical relationships between various building properties, earthquake excitation and damage distributions. To eliminate the biases introduced by less similar source buildings' distributions, we design new physics-guided weights based on similarities between buildings. The source buildings which have more similar physical properties to the target building have higher weights in training loss function. The new weight design ensures the algorithm to prioritize knowledge transfer from similar source buildings and reduce the biases and noises induced by other source buildings. The experiments show the new weights significantly improve the performance in real-world practices. 

This work has four main contributions:
\begin{itemize}
    \item[1.] \textcolor{black}{To the best of our knowledge, the introduced framework is the first multiple source domain adaptation framework for earthquake-induced building damage diagnosis without any labels of the target building.} This end-to-end framework integrates data augmentation, input feature extraction, and adversarial domain adaptation for damage diagnosis tasks including damage detection and damage quantification. This framework ensures both the discriminativeness and domain-invariance of the extracted features by jointly optimizing the feature extractor, domain discriminator and damage predictor.
    \item[2.] We introduce adversarial domain adaptation to learn and transfer the knowledge from \textcolor{black}{multiple  source buildings with heterogeneous data distributions}. The adversarial training scheme avoids the need for complex sampling or inference during the training process, which enables an efficient learning of underlying domain-invariant feature representations and is robust to complicated distribution changes.
    \item[3.] We design a new physics-guided weight in the loss function based on the physical similarities of buildings. The new weights help eliminate the biases introduced by those source buildings with less physical similarities to the target building. We prove that our new physics-guided loss provides a tighter upper bound for the damage prediction risk on the target domain compared to the general loss without combining physical knowledge.
    \item[4.] We characterize the performance of our framework using both numerical simulation data and real-world experimental data, including 5 different buildings subjected to 40 earthquakes for simulation and a real-world 4-story building. Especially, we implement the  knowledge transfer across  simulation data, and from  simulation data to real-world experimental building damage diagnosis. The results show that our framework outperforms by other methods in both tasks.

\end{itemize}

\begin{figure}[ht!]
    \centering 
    \includegraphics*[scale=0.085]{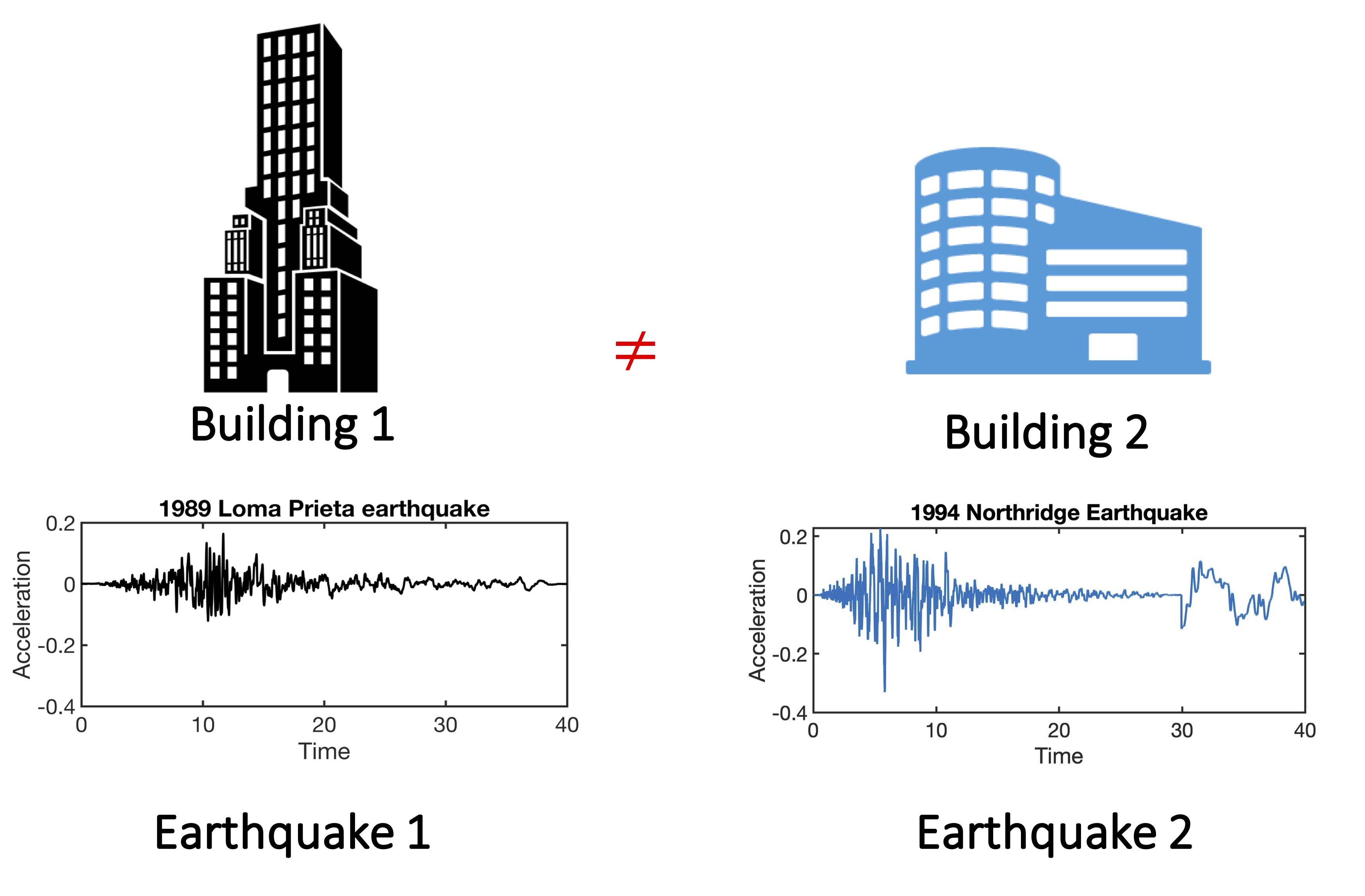}
    \caption{The data distribution (features, damages) of building 1 under Earthquake 1 does not equal to the data distribution of building 2 under Earthquake 2 in real-world practices, which violates the underlying assumptions of most current supervised learning methods that the data distribution is consistent across  training and test data.  \label{fig:transfer_idea}}
\end{figure} 

The remainder of this paper is organized as follows. Section~\ref{sec:relatedwork} discusses related work about the domain adaptation and its applications.
Section~\ref{sec:challenge} analyzes the domain adaptation challenges in earthquake-induced building damage diagnosis scenarios. Section~\ref{sec:framework} describes our knowledge transfer framework for building damage diagnosis, including the problem definition, model architectures, loss function design, and adversarial domain adaptation training scheme. Section~\ref{sec:eval} evaluates our knowledge transfer framework on knowledge transfer between numerical simulation data as well as from simulation data to experimental data. Section~\ref{sec:con} concludes the work and gives further discussions.

\section{Related Work }\label{sec:relatedwork}
Earthquake-induced building damage diagnosis is a challenging problem. 
In recent years, wide deployment of vibration sensing systems in buildings has provided rich building responses during earthquakes and enabled the applications of data-driven approaches for earthquake-induced building damage diagnosis~\cite{pakzad2009statistical,goulet2015data, lignos2008shaking,ji2010seismic}. Statistical models or machine learning techniques, such as multivariate linear regression~\cite{hwang2018nonmodel}, support vector machine~\cite{gui2017data}, kernel regression~\cite{young2011use}, deep autoencoder~\cite{pathirage2018structural}, and deep convolutional neural networks~\cite{abdeljaber2017real}, are utilized to extract damage indicators from structural responses during earthquakes, either in frequency domain or time domain, and then estimate the building damages~\cite{young2011use, noh2012development,  hwang2018nonmodel, hwang2017earthquake}. However, in practice, many buildings often have very limited or even no labels available for the collected structural response data, which makes it difficult to train the damage prediction model for the target building. Directly adopting the model learned from other buildings will result in low prediction performance due to inconsistent distributions of training and test data, which also constrains the utilization of valuable labeled datasets from other buildings.

In machine learning communities, researchers introduced domain adaptation for knowledge transfer between inconsistent training and testing data distributions. 
Domain adaptation is one of classical transfer learning methods~\cite{weiss2016survey}. 
In domain adaptation, data distributions are different for the source and target domain, but the learning tasks, e.g., damage diagnosis, are the same across the source domains and target domain~\cite{daume2006domain, zadrozny2004learning, zhao2019learning, yu2018ilpc}. Especially, unsupervised domain adaptation focuses more on the unsupervised learning tasks in the target domain~\cite{dai2008self, wang2008transferred}, which means that the target domain data has no labels. In practice, most structural damage diagnosis tasks using other buildings' data are essentially addressing the domain adaptation problem, specifically, the unsupervised domain adaptation problem. This is because 1) the feature spaces are the same, although the distributions of features may be different. For example, we have the same sensing modality, e.g., floor vibration signals, for both source and target domains. 2) We have the same damage prediction tasks for different buildings since the definition of building damage states are consistent in source and target domains. 3) There is often no label available from the target building during earthquake excitation. \textcolor{black}{It is possible that sometimes undamaged labels will be available for the target structure. These undamaged labels may be collected under ambient excitation or previous earthquake excitations. However, ambient excitation has quite different vibration patterns compared to earthquake excitation, for example, ambient excitation is mostly stationary measurements while earthquake excitation is non-stationary~\cite{young2011use}. The target structures data distribution under ambient excitation will be very distinct from that under earthquake excitation. The target structure under weak ambient excitation should be treated as a new source domain. Therefore the problem still remains as an unsupervised domain adaptation problem. It is also possible, though rarely happening, that a limited number of labels are collected from previous earthquake excitations. In this case, this problem will become a semi-supervised domain adaptation problem~\cite{saito2019semi}. Our proposed framework can be easily generalized to semi-supervised domain adaptation scenarios, which will be discussed in Section~\ref{sec:adv_frame}.}

Traditional domain adaptation includes 2 types of approaches: instance-based and feature representation-based. \textcolor{black}{Instance-based learning assumes that certain parts of data from the source domains can be reused for the target domain. During the learning process, instance-based methods match the joint distribution $P(X,Y)$ between the source and target domain by re-weighting the labeled instances $(X,Y)$ from the source domains, where $X$ is the input feature and $Y$ is the label. To find the optimal re-weighting schemes, these approaches often need labeled data from the target domain~\cite{weiss2016survey,yao2010boosting,pan2010survey}}. Feature representation-based methods focus on learning a representation of the input $X$ in a reproducing kernel Hilbert space (RKHS) in which the feature distributions of different domains are close to each other (\textbf{domain-invariant}). Our solution falls into the feature representation-based class, which aims to first extract domain-invariant feature representations and then build an optimal damage classifier based on the extracted feature representations. Especially, we need to learn domain-invariant representations across \emph{multiple} source domains and the target domain.

In recent years, feature representation-based domain adaptation has been widely studied in different areas, such as image recognition~\cite{pan2011domain,pan2010survey}, natural language processing~\cite{koehn2007experiments}, sentiment analysis~\cite{blitzer2007biographies} and so on. Classic methods are developed and applied in these applications, including transfer kernel learning~\cite{long2014domain} (TKL), geodesic flow kernel~\cite{gong2012geodesic} (GFK), joint distribution adaptation~\cite{long2013transfer} (JDA), and transfer component analysis~\cite{pan2010domain, mirshekari2020step} (TCA). 
However, these methods mainly focus on learning shallow feature representations, which constrain their abilities to understand the complex distribution changes as well as the knowledge transfer performance. Meanwhile, these methods often separately conduct domain adaptation and feature learning, which makes it difficult to ensure the extracted features to be simultaneously discriminative and domain-invariant. \textcolor{black}{Recently, researchers also applied domain adaptation methods for knowledge transfer across different civil  structures, including echo pattern analysis~\cite{ye2017robust}, structural health monitoring~\cite{gardner2020application}, and footstep-induced floor vibration~\cite{mirshekari2020step}. However, most of these methods apply existing classical single-source-single-target domain adaptation methods, which cannot handle multiple source domains and complex distribution shifting across domains. }


To ensure the discriminativeness and domain-invariance of the extracted feature representations, researchers developed adversarial frameworks to embed domain adaptation into the process of feature representation extraction. In the adversarial frameworks, a domain discriminator block is added to distinguish between the samples from a source domain and the target domain, which provides strong regularization to encourage the domain-invariance of the extracted feature representations ~\cite{ganin2014unsupervised,ganin2016domain, tzeng2017adversarial, tzeng2015simultaneous}. Deep domain confusion loss has been introduced to maximize the similarity between the data distributions of source and target domains~\cite{tzeng2015simultaneous}. Gradient reversal algorithm has been designed to achieve adversarial training by reversing the gradient of the domain discriminator during the back-propagation~\cite{ganin2014unsupervised}. \textcolor{black}{Adversarial discriminative domain adaptation alternatively learns the feature representations and trains the domain discriminator~\cite{tzeng2017adversarial}. This has been applied in bearing fault diagnosis~\cite{wang2020triplet}, but these applications focus on single-source-single-target problems.} When there are multiple source domains, the distribution changes become more complicated. Naive applications of those single-source-single-target methods would result in sub-optimal solutions.

Some existing multiple source domain adaptation approaches are mostly based on fixed feature representation learning and can not utilize effective deep neural network models~\cite{gan2016learning,hoffman2012discovering}. Zhao et al. introduced a deep adversarial domain adaptation method for multiple source domain adaptation~\cite{zhao2018adversarial}. This method combines multiple domain discriminators to extract domain-invariant features across multiple source domains and a target domain~\cite{zhao2018adversarial}. 
This method treats all domains with the same importance for domain-invariant feature representation learning. However, it is difficult to ensure that each domain contains equally important information for the target tasks in real-world practices. \textcolor{black}{To better integrate the information from different source domains, previous studies determine the weights through measuring how closely related each source data is to target data. But these methods either require labeled data on the target domain~\cite{ge2014handling, chattopadhyay2012multisource}, or have strict smoothness assumptions of the conditional and marginal data distributions~\cite{sun2011two, seah2012combating}. These assumptions are not practical in high-dimensional complex data manifolds. Measurement of the relatedness based on inaccurate data manifold estimation is not robust nor reliable~\cite{weiss2016survey}. In the meantime, we do not have available labeled target data to find the optimal weights in the projected space. Therefore in this work, we design the physics-informed weights based on prior physical knowledge to approximate the differences between source domains and better integrate the information for knowledge transfer. Note that our method is different from instance-based method, as we re-weight the source domain (i.e., the entire data distribution of individual source domains) instead of individual instances. Besides, we do not require labeled target data to find the optimal weights, but we introduce physics-informed weights based on our prior physical knowledge. }

\section{Data Distribution Shifting Challenges for Post-earthquake Building Damage Diagnosis}\label{sec:challenge}
In this section, we first describe the shifting of joint distribution of the inputs and labels, which is a common challenge in domain adaptation for post-earthquake building damage diagnosis scenarios. Then we characterize the data distribution changes with different earthquakes and building types. 

Domain adaptation is important for enabling wide applications of data-driven post-earthquake building damage diagnosis in data-constrained scenarios. In the general framework of data-driven approaches, structural responses, e.g., the structural vibration signals, are collected as input $x$, the respective structural damage states are label $y$, and a relation between $x$ and $y$ are defined as a function $\pazocal{F}: x \rightarrow y$ in a discriminative way or a joint distribution $P(x, y)$ in a generative way~\cite{raina2004classification}. A common underlying assumption for general supervised learning methods is that the marginal distributions $P(x)$ and $P(y)$ and the joint distribution $P(x,y)$ are consistent between the training dataset and the test dataset.  

\begin{figure}[th!]
\begin{center}
	\begin{subfigure}{0.485\textwidth}
		\includegraphics[scale=0.25]{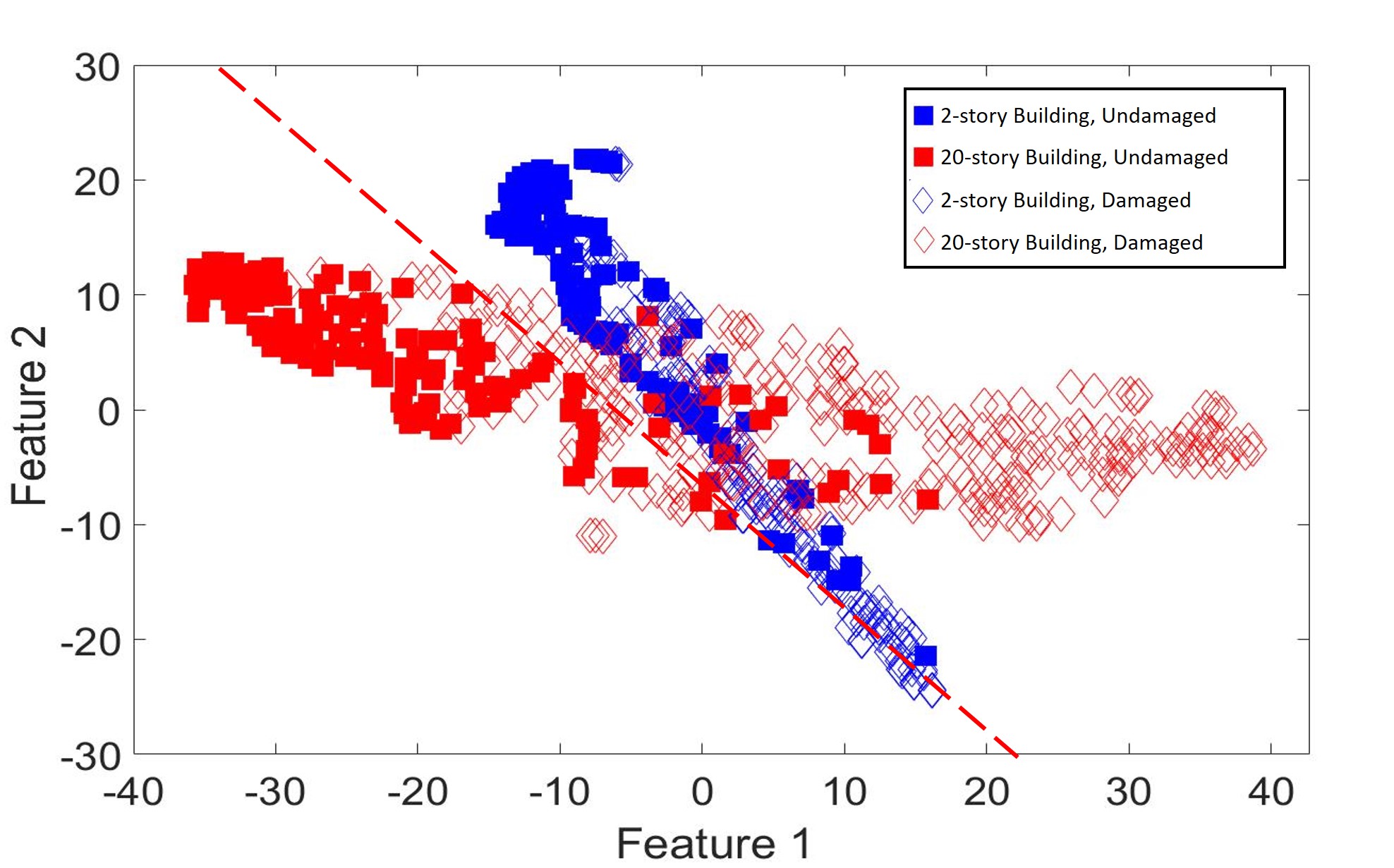}
		\caption{}
		\label{fig:1a}
	\end{subfigure}
	\begin{subfigure}{0.485\textwidth}
		\includegraphics[scale=0.26]{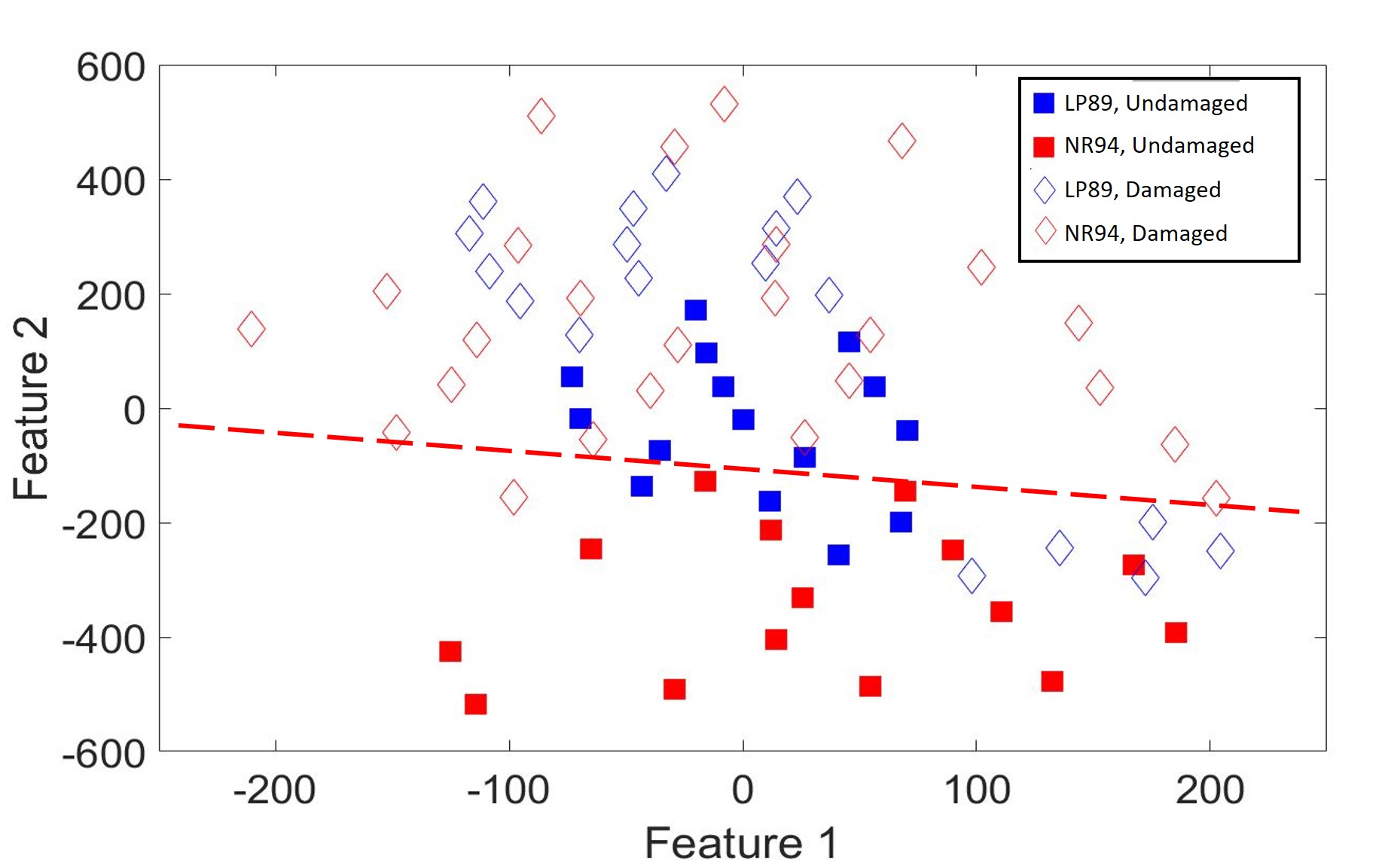}
		\caption{}
		\label{fig:1b}
	\end{subfigure}
	\caption{ The 2D tSNE visualizes the significant changes of data distributions of different buildings under the same earthquake and of the same building under different earthquakes. The red points represent data sampled from the source domain and the blue points represent data sampled from the target domain. The red dotted line shows the classification decision boundary for source domain (red points). Diamond indicates damaged and square indicates undamaged. (a) shows the difference between data distributions of a 20-story building and a 2-story building. (b) shows the difference between data distribution of 1994 Northridge earthquake and 1989 Loma Prieta earthquake. Both figures show that directly applying the model trained on source domain dataset (red) to diagnose structural damage on target domain (blue) results in low accuracy.}
\end{center}
\end{figure}

However, data distribution shifting commonly exists in model transfer between different buildings.
Denote the building with labeled data which we would like to transfer the knowledge from as source domain, i.e., $s$. Denote the building without any labeled data that we need to diagnose as target domain, i.e., $t$. The structural response and damage states collected from the source domain are $X^s$ and $y^s$, respectively, and those from the target domain are $X^t$ and $y^t$, respectively. The data distribution in the source domain does not equal to the data distribution in the target domain, i.e., $P(X^s, y^s)\neq P(X^t,y^t)$. Figure~\ref{fig:1a} presents the 2 dimensional t-Distributed Stochastic Neighbor Embedding (t-SNE) visualization~\cite{maaten2008visualizing} of structural vibration signals collected from a 2-story building (blue) and a 20-story building (red). The red dotted line refers to the true decision boundary for damage detection of the 20-story building (red points). The performance of damage detection for the 2-story building drops dramatically if the decision boundary learned from the 20-story building is used. Meanwhile, the data distributions of the same building under different earthquake excitation are also different. Figure~\ref{fig:1b} shows the visualization of structural vibration signals collected from the 1989 Loma Prieta (LP) earthquake (blue) and the 1994 Nothridge (NR) earthquake (red). The red dotted line represents the damage decision boundary for buildings under NR earthquake. Directly adopting the decision boundary learned from NR earthquake results in inaccurate damage decision for LP earthquake. 

\subsection{Characterization of Distribution Changes with Earthquakes and Buildings}
The joint distributions of the inputs and labels change with different earthquakes and buildings. The physical properties of earthquakes, including ground motion intensities, length, and seismic waveform, significantly influence the structural damage patterns. The physical properties of buildings ( e.g. stiffness, damping, structured design, and non-structural components) are also closely related to the damage occurrence. In this section, we visualize how ground motion intensity, which is a typical earthquake characteristic, and building height, which is a typical building property, would influence the complex data distribution changes.

\subsubsection{Impact of Ground Motion Intensities}\label{sec:impact_gm}

The previous research works show that the influence of earthquake excitations on structural damage patterns is significant and difficult to accurately model~\cite{trifunac1975correlation,spence1992correlation}. 
The process of damage initiation and progression is highly non-linear, time-variant, and involves complicated wave propagation inside structures, which makes it difficult to quantitatively and accurately model the impact of earthquakes. It has been shown that the correlation between damages and ground motion characteristics are not consistent based on the observations in buildings in Mexico City~\cite{spence1992correlation}. There also exist many studies exploring load-deformation model relating the engineering ground motion parameters to the structural damages. For example, the ground motion intensity is  related to hysteretic energy~\cite{park1985seismic}. Destructiveness of seismic ground motions is also related to the seismic duration, maximum absolute ground acceleration, and frequency content of the respective strong ground motion~\cite{araya1985earthquake}. These studies show that it is difficult to build an explicit transfer function to model the distribution changes induced by a variety of earthquakes. 

\begin{figure}[th!]
\begin{center}
	\begin{subfigure}{0.48\textwidth}
		\centering
		\includegraphics[scale=0.19]{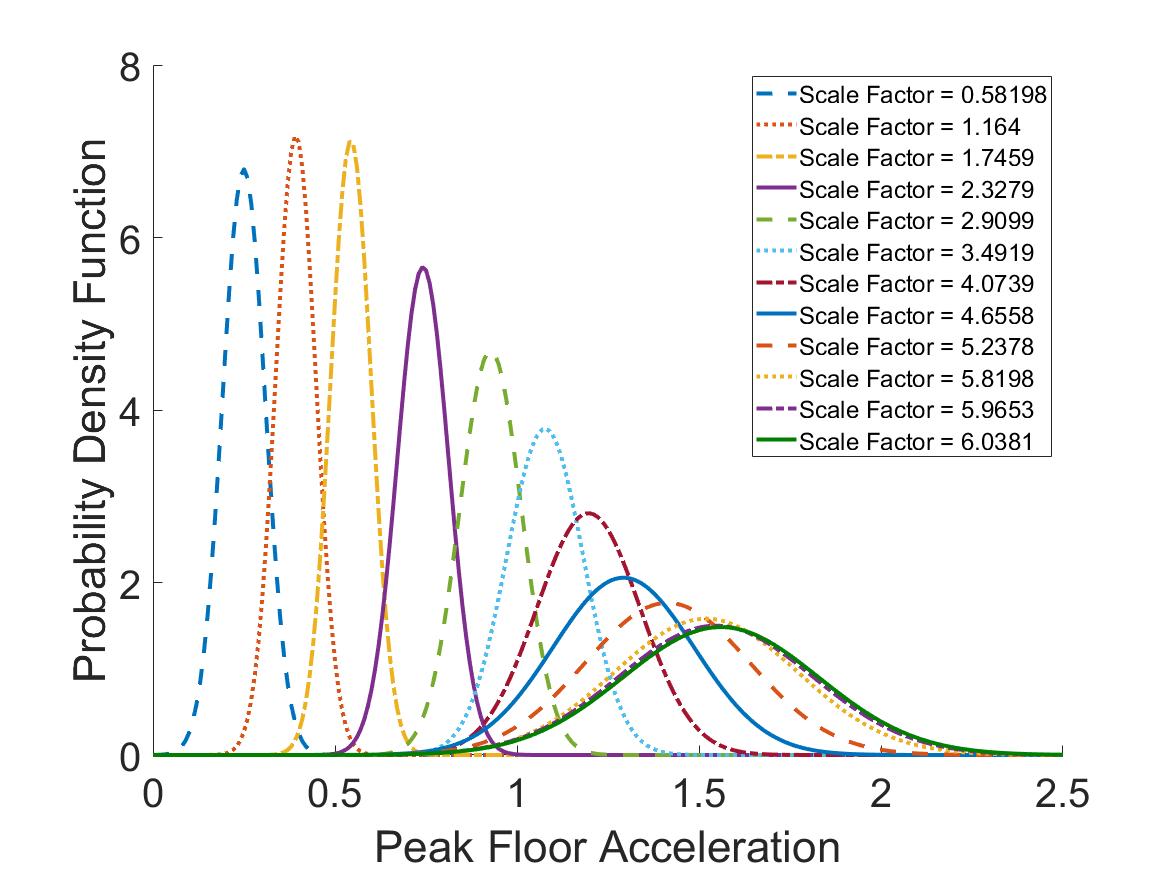}
		\caption{}
		\label{fig:pfa_sf}
	\end{subfigure}
	\begin{subfigure}{0.48\textwidth}
		\centering
		\includegraphics[scale=0.19]{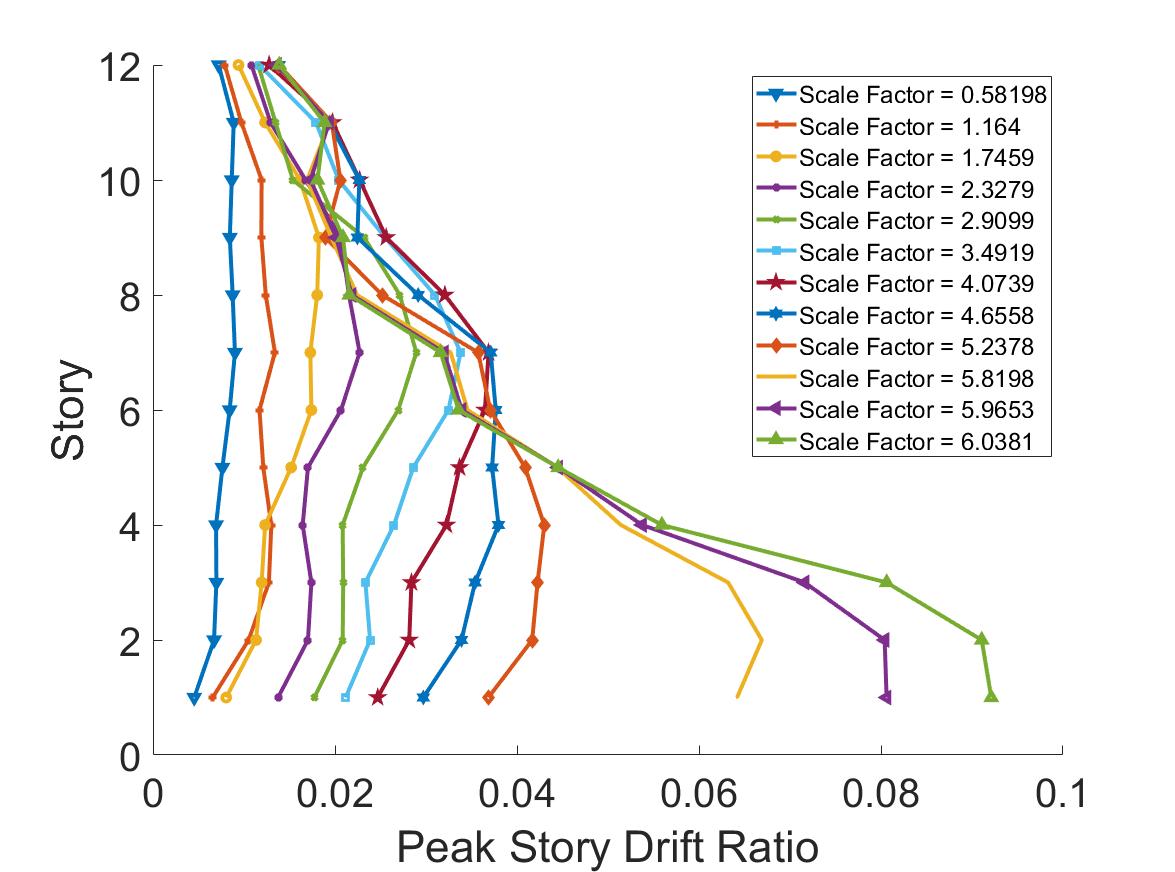}
		\caption{}
		\label{fig:psdr_sf}
	\end{subfigure}
	\caption{ The distributions of (a) peak absolute floor accelerations (PFAs) and (b) peak absolute story drift ratio (SDRs) changes with different ground motion intensities. The data is collected based on incremental dynamic analysis of a 12-story building based on the ground motion observed in Station Gilroy Array \#3 during 1989 Loma Prieta earthquake. Note that the figures only show the data collected from a single building subjected to the same earthquakes with different scaling factors. The real-world distribution changes would be much more complicated and intractable due to various earthquakes and building frames.}
	\label{fig:sf}
	\end{center}
\end{figure}

Here we give examples, using a 12-story building, to show how data distributions, including the distributions of peak absolute floor acceleration (PFAs) and peak absolute story drift ratio (SDRs) change with the ground motion intensities. Figure~\ref{fig:pfa_sf} shows the PFAs distribution changes with different scaled intensities of a seismic ground motion (the ground motion observed in Station Gilroy Array \#3 during 1989 Loma Prieta earthquake) in an incremental dynamic analysis. Different scale factors for incremental dynamic analysis indicate different ground motion amplitudes used for simulations. We use normal distribution to fit the density estimation of PFAs. With intensity increasing, both mean and variance of PFAs increases. Statistically, as the intensity increases, the support of PFA distribution spreads wider. When there are limited data which mostly fall into the overlapping supporting range, it would become more difficult to distinguish the distribution changes. Figure~\ref{fig:psdr_sf} shows how the peak story drift ratio changes with increasing ground motion intensities on each story of the 12-story building. With ground motion intensity increases, the SDRs of middle stories first increase. When intensities become higher, the lower stories ($1\sim5$ story) are severely damaged and finally collapse. This transition trend indicates that under different ground motion intensities, the changes of data distribution vary a lot at different stories.

In summary, Figure~\ref{fig:sf} shows that even with the same type of ground motions, the building exhibits distinct responses and damage patterns under different intensities. When different buildings are subjected to various types and intensities of earthquakes, the changes of their structural responses and damage patterns become more complicated and hard to model. Therefore we combine the ground motions of earthquakes as the input features for reducing the complexity of domain adaptation, which would be discussed in Section~\ref{sec:loss}.
\subsubsection{Impact of Building Heights}

The structural damage pattern is closely related to various structural and non-structural components of buildings, including the type of the structural frames, stiffness, damping, and height. These physical properties affect the structural system's elastic and elasto-plastic behaviour under earthquakes and thus are key elements to determine the structural response-damage distributions~\cite{scawthorn1981seismic,cosenza2000damage,nam2006seismic}. These influences are often interdependent and complex. For example, input energy is not only related to the elastic period of the structure, but also the viscous damping and the characteristics of the plastic response~\cite{cosenza2000damage}. More importantly, we often lack the detailed prior knowledge about the physical properties of buildings. The lack of sufficient physical knowledge makes it impossible to explicitly model the impacts of physical properties on structural damage patterns.   

\begin{figure}[th!]
\begin{center}
	\begin{subfigure}{0.49\textwidth}
	\centering
		\includegraphics[scale=0.38]{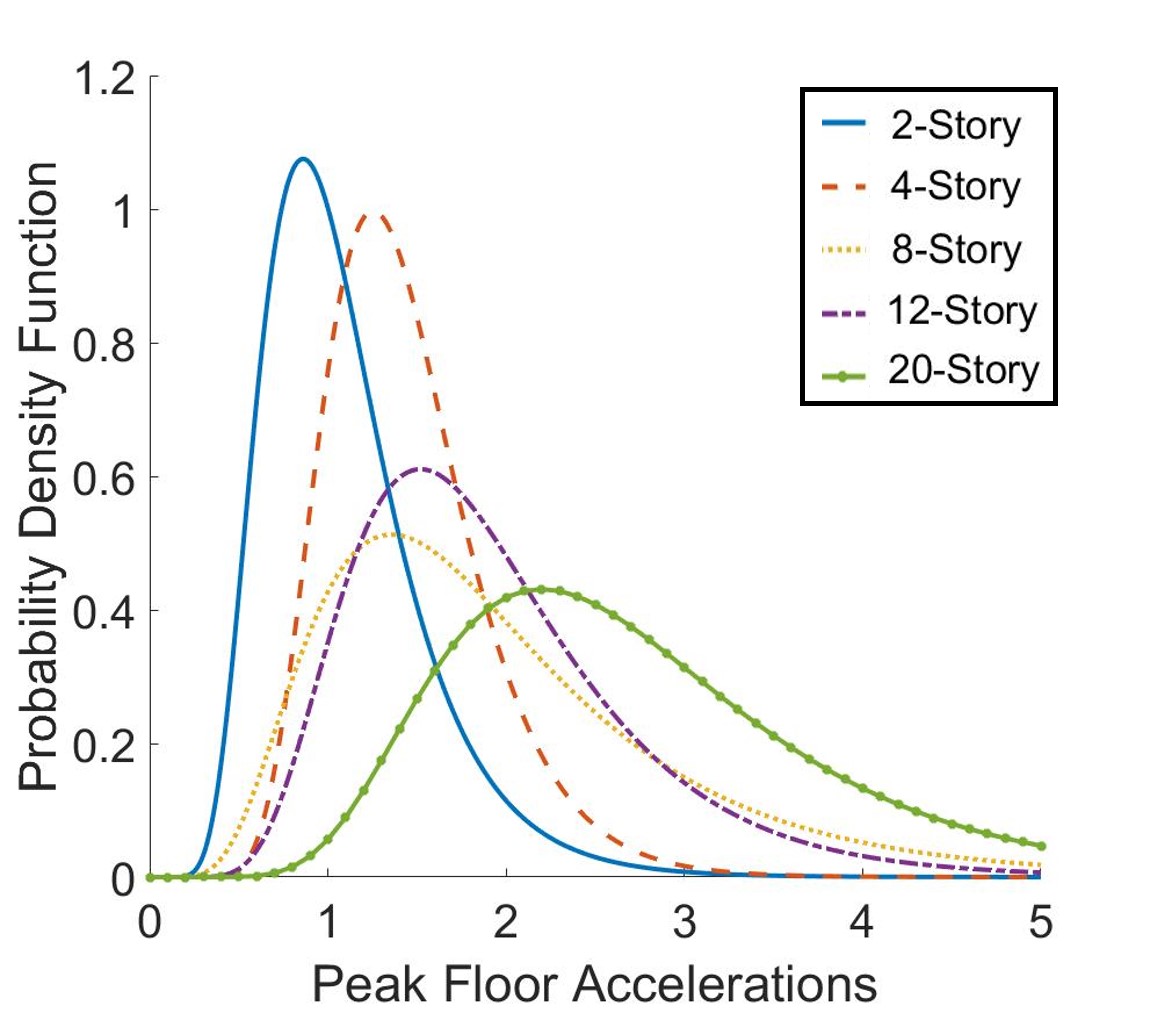}
		\caption{}
		\label{fig:pfa_story}
	\end{subfigure}
	\begin{subfigure}{0.49\textwidth}
	\centering
		\includegraphics[scale=0.38]{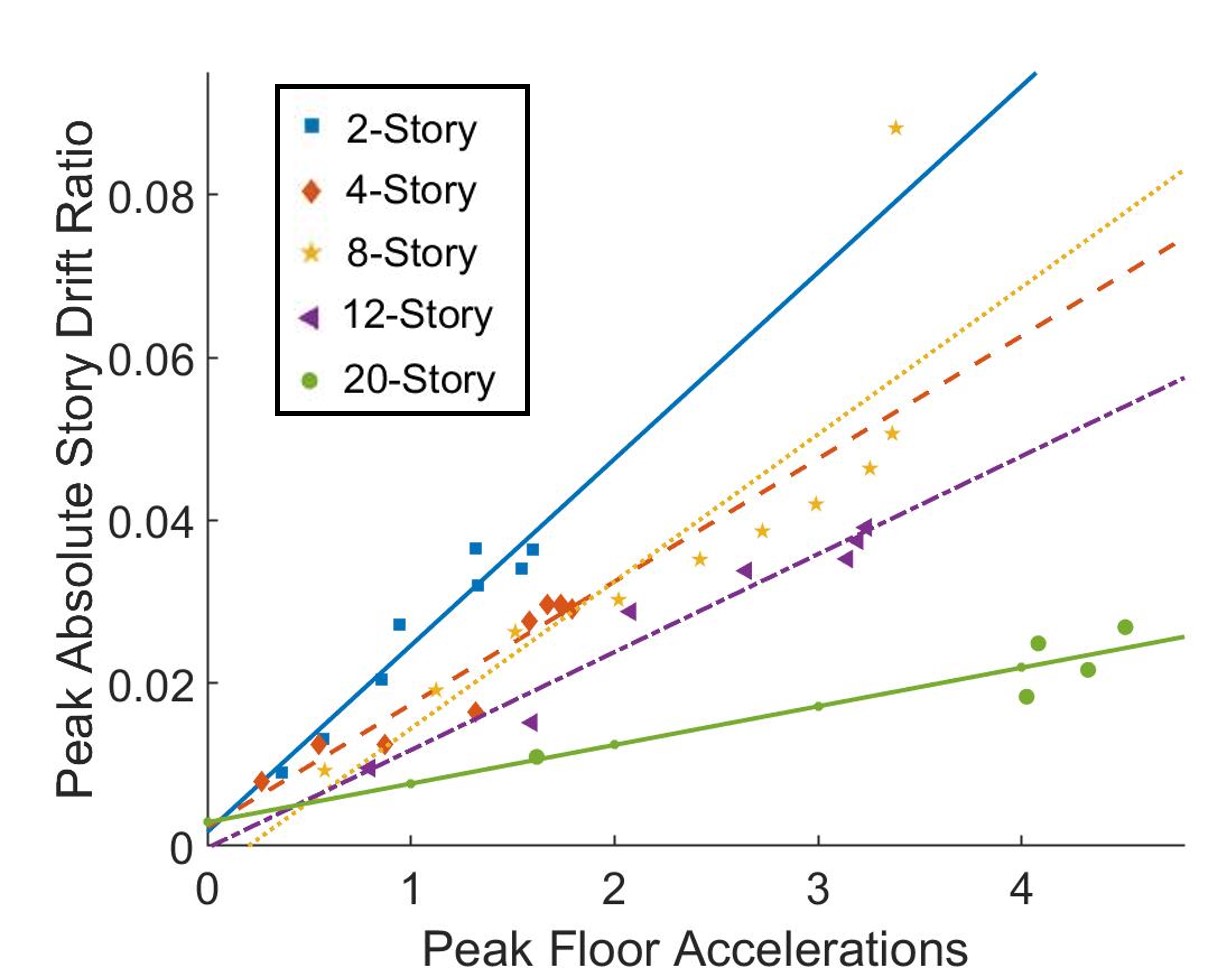}
		\caption{}
		\label{fig:psdr_story}
	\end{subfigure}
	\caption{The distributions of (a) peak absolute floor accelerations (PFAs) and (b) correlations between mean peak story drift ratio (SDRs) and peak absolute floor accelerations (PFAs) change with different building heights. The data is collected based on incremental dynamic analysis of 2-story, 4-story, 8-story, 12-story and 20-story buildings under different ground motion intensities.}
\label{fig:story}
\end{center}
\end{figure}

 Instead of constructing transfer functions using detailed prior physical knowledge, we utilize  simple characteristics which are easier to be obtained. In real-world practices, we may have some approximated and simplified physical knowledge, e.g., building heights, story/height ratio, and so on. It is difficult to directly learn the distribution transitions based on these simplified physical knowledge, but it is still helpful to quantitatively combine this physical knowledge to let it guide and improve the training of data-driven models. 
 
 As an example, we show how data distributions vary with building heights in Figure~\ref{fig:story}. We present the PFAs and mean SDRs of the $2$nd story of $2$-story, $4$-story, $8$-story, $12$-story and $20$-story buildings with Steel Moment-resisting Frames (SMFs) under different ground motions intensities. 
Figure~\ref{fig:pfa_story} shows the peak absolute floor acceleration distributions fitted with \emph{Log-Normal} distributions. It is shown that the distributions of structural responses are distinct from each other. The general trend is that at the same story, high-rise buildings tend to have more spread structural responses. Figure~\ref{fig:psdr_story} shows the correlations between PFAs and corresponding peak SDRs of different buildings. It is shown that low-rise buildings (e.g., 2-story and 4-story building) have larger SDRs, indicating more severe damages. This may be because the lower-rise buildings tend to experience a larger increase in ductility demands~\cite{anagnostopoulos1992investigation}. 
A general trend is that the buildings with similar heights tend to have similar damage patterns given the same type of structural frame and the same ground motion. The similar physical properties indicate similar responses and deformation patterns given consistent earthquake loading. 
 In Section~\ref{sec:loss}, we show how we incorporate the building heights into the design of loss function for our data-driven model.



\section{Adversarial Knowledge Transfer Framework}\label{sec:framework}
In this section, we introduce a new intelligent building damage diagnosis framework which transfers the knowledge learned from multiple other buildings to predict and quantify the damage states of the target building without any label. As Figure~\ref{fig:framework} shows, our framework includes $3$ steps: data preprocessing, adversarial domain adaptation training, and damage diagnosis. Adversarial domain adaptation jointly trains $3$ neural networks, including domain-invariant feature extractor (E), damage predictor (M) and domain discriminator (D). The $3$ components are jointly optimized to obtain 1) domain-invariant and discriminative feature representations, and 2) a well-trained damage predictor which maps from the extracted features to damage states. Finally, in the stage of target building damage diagnosis, we directly input the extracted features for target domain data to well-trained damage predictor to diagnose the damage states of the target building. In this section, we first introduce our data preprocessing which prepares new preliminary inputs combining the earthquake information and structural responses in Section~\ref{sec:preprocess}. Then we present the details of adversarial domain adaptation training (Section~\ref{sec:adv_frame}), including problem background and key definitions (Section~\ref{sec:formulation}), architectures of each component (Section~\ref{sec:arch}), and our new physics-guided loss function (Section~\ref{sec:loss}). 
\begin{figure}[ht!]
    \centering 
    \includegraphics*[scale=0.135]{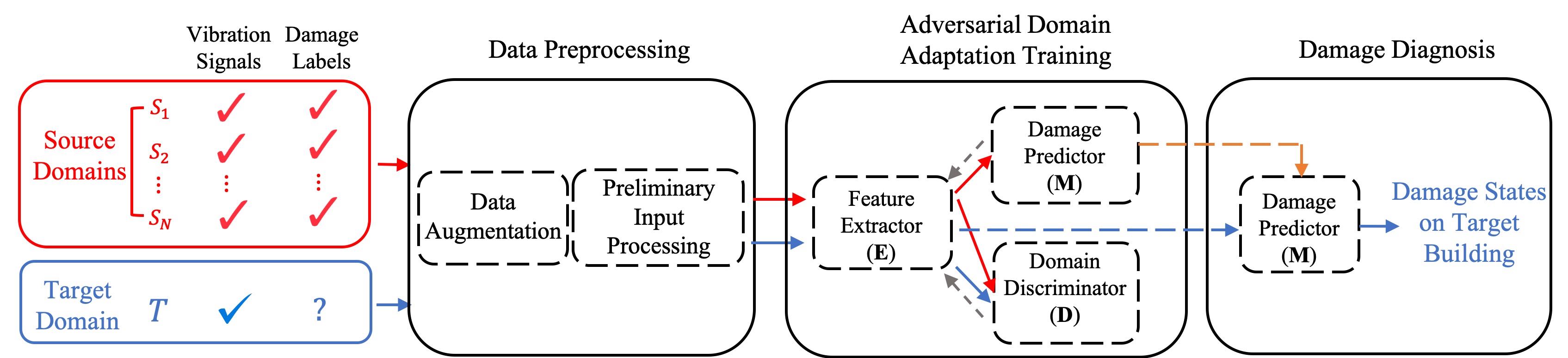}
    \caption{Our adversarial framework for multiple source domain adaptation for building damage diagnosis. Red represents source domains and blue refers to the target domain. \textcolor{black}{Grey dot line refers to back-propagation direction during training phase. The orange dot line means that we use the trained damage predictor as the final classifier for target damage prediction.} In our setup, there is no label available for target domain data, which is common in post-disaster scenarios due to the difficulty of label acquisition. \label{fig:framework}}
\end{figure}


\subsection{Data Preprocessing}\label{sec:preprocess}
The focus of the data preprocessing includes two key steps: 1) preliminary input processing, and 2) data augmentation. 

\emph{Preliminary input processing}: Based on the characterization in Section~\ref{sec:challenge}, we know that the difference between data distributions of source domains and target domain is related to both earthquakes and building properties. 
By including the ground motion information into the input, we can reduce the problem as learning building-invariant feature representations, by providing the information of earthquakes. Specifically, given floor vibration signals collected from source buildings and the target building, we extract preliminary inputs by combining the ground motion accelerations, the floor vibrations, and the ceiling vibration of each story. We first extract the frequency spectrum of each vibration signal. 
If the sampling rates are inconsistent across signals from different buildings, up-sampling and down-sampling are applied to unify the frequency resolution. Then we stack the frequency spectrums of the ground motion, the floor vibration and the ceiling vibration. Assuming the length of each frequency spectrum is $l$, we have each sample's input with size of $3\times l \times 1$. With stacked frequency information, the convolutional layers in later adversarial domain adaptation model can better understand the local correlation between ground motion and floor acceleration to extract domain-invariant features. 

\emph{Data Augmentation}: In real-world practice, the size of each domain's dataset varies significantly, from tens to thousands of samples. Training the deep neural network often needs a large amount of data to avoid over-fitting and biases induced by noises, as well as to ensure sufficient information for learning. Therefore, we conduct data augmentation on raw dataset to increase the data size. We use varying-length sliding window on the raw vibration signals. The vibration signals can be split into three stages: ambient excitation/free vibration before earthquake, earthquake-induced vibrations, ambient excitation/free vibration after earthquake. The rule of thumb to select the window length is to ensure that the duration of earthquake excitation is always covered. The window length varies to cover a part of vibration signals before earthquake happening and after earthquake happening. Then we smooth the signals and conduct the aforementioned preliminary input processing. The data augmentation increases the data amount and introduces different types of noises, which is important to improve the robustness of domain-invariant and discriminative feature representation learning.

\subsection{Adversarial Domain Adaptation}\label{sec:adv_frame}
This section introduces our adversarial learning algorithm for multiple source domain adaptation. We first provide a mathematical overview of domain adaptation problem. Then we present the detailed architectures of each component of the algorithm. Finally, we introduce the new physics-guided loss function for model optimization.

\subsubsection{Background and Definitions}\label{sec:formulation}

For simplicity, we denote the source domains as $S$ and target domain as $T$. Assume that we have $n \geq 2$ source buildings. Each of them has a distinct data distribution. We denote each source domain as $S_i\in S, i\in \{1,\cdots, n\}$. The training dataset in each source domain $S_i$ is composed of $S_i = \{ x_j^{S_i}, y_j^{S_i}\}_{j=1}^{m_{S_i}}$, where $x_j^{S_i}, y_j^{S_i}$ refers to the $j$th sample collected from the $i$th source building, and $m_{S_i}$ refers to the number of samples from the $i$th source building. Denote the sample space for $i$th domain as $\pazocal{X}_{S_i}$ where $x_j^{S_i}\in \pazocal{X}_{S_i}$, and damage state space for $i$th domain as $\pazocal{Y}_{S_i}$ where $y_j^{S_i}\in \pazocal{Y}_{S_i}$. Assume the labeled source data $S_i$ is drawn $i.i.d$ from the distribution $\pazocal{D}_{S_i}$. For the target building, we have a set of the collected earthquake-induced building floor vibration signal $\{x_j^T\}_{j=1}^{m_T}$, where $m_T$ refers to the number of instances we need to diagnose for the target domain. In this work, we assume that $m_T \ll \sum_i^n m_{S_i}$. Their corresponding true damage labels $\{y_j^T\}_{j=1}^{m_T}$ are unknown. Similar to the source domain definition, denote $x_j^{T}\in \pazocal{X}_{T}$ and  $y_j^{T}\in \pazocal{Y}_{T}$ for the target domain $T$. Denote the true data distribution on the target domain as $\pazocal{D}_{T}$. \textcolor{black}{As the damage label definition is consistent across all buildings~\cite{prestandard2000commentary}, it is reasonable to assume that source and target labels belong to the same label space $\pazocal{Y}$.}  Assume the unlabeled target sample $T$ is drawn $i.i.d$ from the marginal distribution of $\pazocal{D}_T$ over $X$, i.e. $\pazocal{D}_{T}^X$. A \emph{hypothesis} is a function/classifier $h: \pazocal{X}_T \rightarrow \pazocal{Y}_T$. The \emph{error}/\emph{risk} of a hypothesis on $\pazocal{D}$ is defined as:
\begin{align*}
    R_{\pazocal{D}} (h) = Pr_{(x,y)\sim \pazocal{D}} (h(x) \neq y).
\end{align*}

Our final goal is to predict the damage states of the target building without any labeled data collected from the target building. That is, without any information about the labels $\{y_j^T\}_{j=1}^{m_T}$, we aim to find an optimal hypothesis with a low target risk $R_{T} (h)$. 

In this problem, the data distribution varies between different source domains and between source domains and the target domain, i.e. $P(\pazocal{X}_{S_i},\pazocal{Y}_{S_i})\neq P(\pazocal{X}_{T},\pazocal{Y}_{T})\forall i$.
\textcolor{black}{\begin{definition}\textbf{$\pazocal{H}$-divergence}
(based on~\cite{kifer2004detecting}) Given a domain $\pazocal{X}$ with $\pazocal{D}$ and $\pazocal{D}'$ probability distributions over $\pazocal{X}$, let $\pazocal{H}$ be a hypothesis class on $\pazocal{X}$. Denote $I(h)$ the subsets of $\pazocal{X}$ which are the support of hypothesis $h$ in $\pazocal{H}$, i.e., $x\in I(h) \Leftrightarrow{} h(x)=1$. The divergence between $\pazocal{D}$ and $\pazocal{D}'$ based on $\pazocal{H}$ is
\begin{align*}
    d_{\pazocal{H}}(\pazocal{D}, \pazocal{D}') = 2\sup_{h\in \pazocal{H}}|Pr_{\pazocal{D}}[I(h)]-Pr_{\pazocal{D}'}[I(h)]|.
\end{align*}
\end{definition}
The ideal joint hypothesis class, $\pazocal{H}$, contains all the possible hypotheses over $\pazocal{X}$. In order to obtain the risk bound, $d_{\pazocal{H}}(\pazocal{D}, \pazocal{D}')$ can be further reduced to the total variation on $\pazocal{H}$. The variation can be depicted by the represent difference relative to other hypotheses in $\pazocal{H}$, which measures the adaptability of a source-trained classifier~\cite{ben2010theory}. \textit{Symmetric difference hypothesis space} $\pazocal{H}\Delta \pazocal{H}$ is defined for reasoning the represent difference in $\pazocal{H}$.
\begin{definition}
Symmetric difference hypothesis space $\pazocal{H}\Delta \pazocal{H}$ (\cite{ben2010theory}). For a given hypothesis class $\pazocal{H}$, the symmetric difference hypothesis space $\pazocal{H}\Delta \pazocal{H}$ is the set of hypotheses
\begin{align*}
    g \in \pazocal{H}\Delta \pazocal{H} \Longleftrightarrow g(x) = h(x) \oplus h'(x) \textnormal{  for }h, h'\in \pazocal{H}.
\end{align*}
\end{definition}
Where $\oplus$ refers to the XOR function. The hypothesis $g\in \pazocal{H} \Delta \pazocal{H}$ represents the set of disagreements between two hypotheses in $\pazocal{H}$.
Define the \textit{empirical $\pazocal{H} \Delta \pazocal{H}$-divergence} as the $\pazocal{H} \Delta \pazocal{H}$-divergence between two sets of unlabeled samples $\hat{S}\sim (\pazocal{D}_S^X)^{m_S}$ and $\hat{T}\sim (\pazocal{D}_T^X)^{m_T}$, i.e. $d_{\pazocal{H}\Delta \pazocal{H}}(\hat{S}, \hat{T})$. $\pazocal{H} \Delta \pazocal{H}$-divergence refers to the represent differences on source and target data, which is closely related to the discrepancy between two data distributions~\cite{ben2007analysis}. }
Ben-David et.al proved that the final target classification risk is upper bounded by the combinations of \textit{empirical $\pazocal{H}\Delta \pazocal{H}$-divergence} $d_{\pazocal{H}\Delta \pazocal{H}}(\hat{S}, \hat{T})$ and the empirical risk on a single source domain $\hat{R}_S(h)$~\cite{ben2010theory}.

\begin{theorem}
\textbf{(Ben-David et al., 2006)} Let $\pazocal{H}$ be a hypothesis class of VC dimension $d$. Given the unlabeled samples $\hat{S}\sim (\pazocal{D}_S^X)^m$ and $\hat{T}\sim (\pazocal{D}^X_T)^{m}$, with probability $1-\delta$, for every function $h \sim \pazocal{H}$:
\begin{align*}
    R_{T}(h) \leq \hat{R}_S(h) +  d_{\pazocal{H}\Delta \pazocal{H}}(\hat{S}, \hat{T}) + \lambda + \pazocal{O}\left(\sqrt{ \dfrac{1}{m} (d\log \dfrac{m}{d}+\log \dfrac{1}{\delta})}\right)
\end{align*}
with $\lambda\geq \inf_{h^*}[R_{S}(h^{\star})+R_{T}(h^{\star})]$. 
\end{theorem}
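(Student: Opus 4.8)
The plan is to derive the bound in three stages: (i) a purely deterministic chain of triangle inequalities that relates the target risk of $h$ to its source risk by routing through an ideal joint hypothesis; (ii) replacing the distribution gap left over from stage (i) by the $\pazocal{H}\Delta\pazocal{H}$-divergence of the two population marginals, using the $\pazocal{H}$-divergence definition above applied to the class $\pazocal{H}\Delta\pazocal{H}$; and (iii) a uniform-convergence step that converts each population quantity into its empirical counterpart, closed off by a single union bound. For stage (i) it helps to introduce the expected disagreement $R_{\pazocal{D}}(h,h') = Pr_{x\sim\pazocal{D}}[h(x)\neq h'(x)]$, so that $R_{\pazocal{D}}(h) = R_{\pazocal{D}}(h,f_{\pazocal{D}})$ where $f_{\pazocal{D}}$ is the labeling function of $\pazocal{D}$, and to note that $R_{\pazocal{D}}(\cdot,\cdot)$ obeys the triangle inequality since it is an $L^1$-type distance between $\{0,1\}$-valued functions. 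Fix $h^{\star}\in\pazocal{H}$ with $R_S(h^{\star}) + R_T(h^{\star}) \le \lambda$, which exists (up to arbitrarily small slack) by the definition of $\lambda$ as the infimum of that sum.

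First I would split off the ideal hypothesis, $R_T(h) \le R_T(h^{\star}) + R_T(h,h^{\star})$, and then trade the target disagreement for the source disagreement. Because $h\oplus h^{\star}\in\pazocal{H}\Delta\pazocal{H}$ and the support $I(h\oplus h^{\star})$ is precisely the set where $h$ and $h^{\star}$ disagree, the definition of the divergence gives
\begin{align*}
\bigl|R_T(h,h^{\star}) - R_S(h,h^{\star})\bigr| \le \sup_{g\in\pazocal{H}\Delta\pazocal{H}}\bigl|Pr_{\pazocal{D}_S}[I(g)] - Pr_{\pazocal{D}_T}[I(g)]\bigr| = \frac{1}{2}\,d_{\pazocal{H}\Delta\pazocal{H}}(\pazocal{D}_S,\pazocal{D}_T).
\end{align*}
Applying the triangle inequality once more, $R_S(h,h^{\star}) \le R_S(h) + R_S(h^{\star})$, and combining with the choice of $h^{\star}$ yields the population bound $R_T(h) \le R_S(h) + \frac{1}{2}\,d_{\pazocal{H}\Delta\pazocal{H}}(\pazocal{D}_S,\pazocal{D}_T) + \lambda$; since the divergence is nonnegative this is in turn at most $R_S(h) + d_{\pazocal{H}\Delta\pazocal{H}}(\pazocal{D}_S,\pazocal{D}_T) + \lambda$, which already has the shape of the claim at the population level.

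It remains to make everything empirical. For the source term I would invoke the classical VC generalization inequality for $\pazocal{H}$: with probability at least $1-\delta/2$, $R_S(h) \le \hat{R}_S(h) + \pazocal{O}\bigl(\sqrt{(d\log(m/d) + \log(1/\delta))/m}\bigr)$ uniformly over $h\in\pazocal{H}$. For the divergence I would first note that $\pazocal{H}\Delta\pazocal{H}$ has VC dimension at most $2d$, since each of its members is the XOR of two members of a VC-$d$ class, so its growth function is bounded by the square of that of $\pazocal{H}$; then a standard two-sample uniform-deviation argument run simultaneously on $\hat{S}$ and $\hat{T}$ gives $d_{\pazocal{H}\Delta\pazocal{H}}(\pazocal{D}_S,\pazocal{D}_T) \le d_{\pazocal{H}\Delta\pazocal{H}}(\hat{S},\hat{T}) + \pazocal{O}\bigl(\sqrt{(d\log(m/d) + \log(1/\delta))/m}\bigr)$ with probability at least $1-\delta/2$. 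A union bound over these two events, followed by merging the two error terms and absorbing constants into one $\pazocal{O}(\cdot)$, produces the stated inequality. The step I expect to be the main obstacle is exactly this last one: unlike the source-risk bound, controlling the divergence is a two-sample uniform-convergence statement over the enlarged class $\pazocal{H}\Delta\pazocal{H}$, so one must check carefully that its VC dimension only doubles (and not worse), keep consistent track of the factor $2$ baked into the definition of $d_{\pazocal{H}\Delta\pazocal{H}}$, and verify that the two sample sizes (both $m$ here) enter the rate as written; the remainder is routine triangle-inequality bookkeeping and textbook VC theory.
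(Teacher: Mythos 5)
Your proposal is correct and follows the canonical Ben-David et al.\ argument (route the target risk through the ideal joint hypothesis $h^{\star}$ by triangle inequalities, replace the source/target disagreement gap by the $\pazocal{H}\Delta\pazocal{H}$-divergence, then apply VC uniform convergence to the source risk and a two-sample deviation bound to the divergence with a union bound); the paper itself states this theorem as an imported result from Ben-David et al.\ and gives no proof, so there is nothing different to compare against. The only step to phrase carefully is the claim that $\pazocal{H}\Delta\pazocal{H}$ has VC dimension at most $2d$: what the growth-function squaring argument directly yields is the two-sample deviation bound for $d_{\pazocal{H}\Delta\pazocal{H}}$ at the stated rate, which is all that is needed inside the $\pazocal{O}(\cdot)$ term.
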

In feature representation-based methods, the empirical risk on the source domain $R_{S}(h)$ depicts the discriminativeness of the damage classifier on the source domain $S$. $\hat{d}_{\pazocal{H}\Delta \pazocal{H}}(S,T)$ depicts the  domain-variance between the source data and the target data on $\pazocal{H}$. 

Extending this theorem to multiple source domains where $S_i\sim (\pazocal{D}_{S_i}^X)^m$ and $T\sim (\pazocal{D}^X_T)^{m}$, Zhao et.al~\cite{zhao2018adversarial} showed that $\forall w_i \in \mathbb{R}_{+}, \sum_i w_i = 1$, with probability $1-\delta$ and for every function $h \sim \pazocal{H}$,
\begin{align}
    R_{T}(h) \leq \sum_i w_i \left( R_{S_i}(h) +  d_{\pazocal{H}\Delta \pazocal{H}}(\hat{S}_i,\hat{T}) \right) + \lambda_w +\pazocal{O}\left(\sqrt{ \dfrac{1}{km} (d\log \dfrac{km}{d}+\log \dfrac{1}{\delta})}\right),\label{eq:1}
\end{align}
where $d$ is the VC dimension of the hypothesis class and $k$ is a constant. To ensure a successful prediction of $h$ on the target domain, we need to minimize the target risk $R_{T}(h)$. To minimize the target risk, we need to minimize the empirical source risk $\hat{R}_{S_i}(h)$ and the empirical distribution distance between $S_i$ and $T$ on the input $\pazocal{X}$. That is, to ensure a successful knowledge transfer from multiple source domains to the target domain, we need to ensure the discriminativeness and the domain-invariance of the extracted features. Meanwhile, as there are multiple source buildings,  $w_i$ here represents a convex combination of the risk upper bounds based on different source building data distributions. 

\begin{figure}[ht!]
    \centering 
    \includegraphics*[scale=0.32]{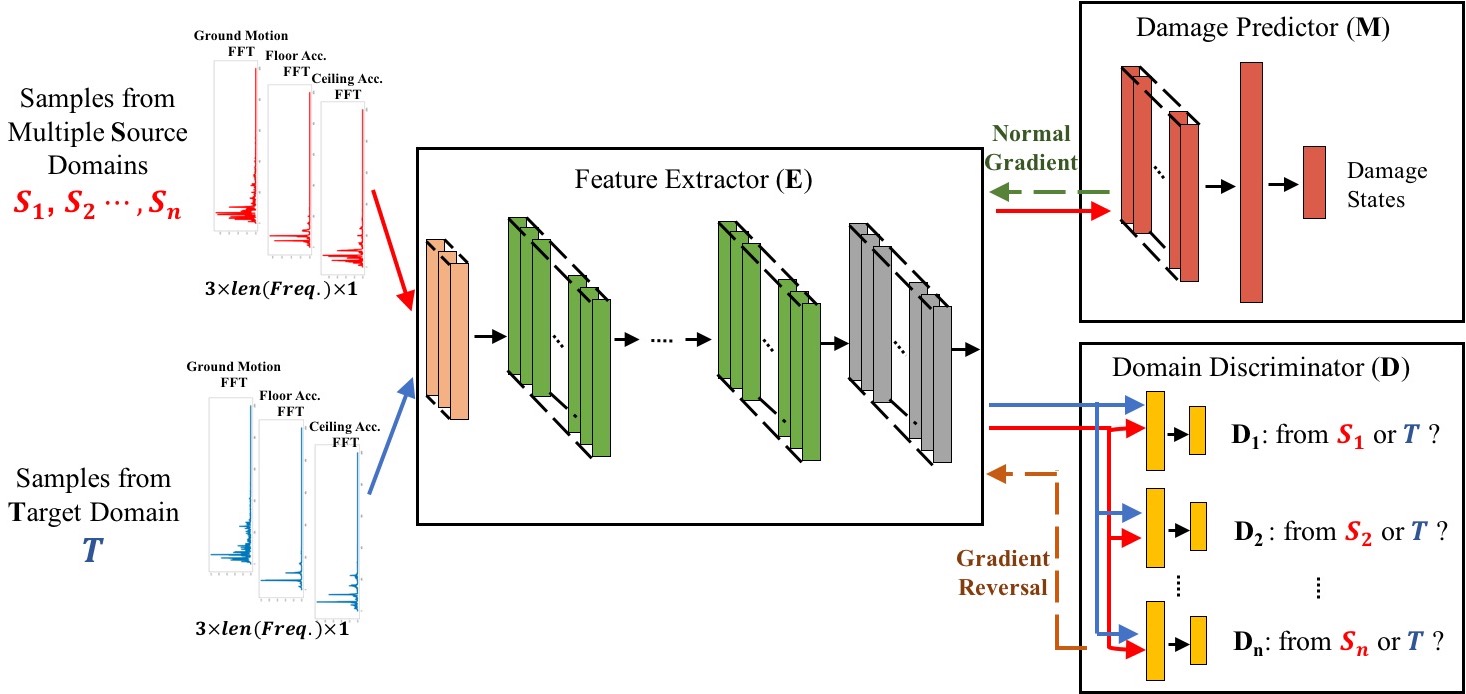}
    \caption{The overview of our adversarial domain adaptation algorithm for knowledge transfer from multiple source buildings to the target building.}
     \label{fig:framework_arch}
\end{figure}

\subsubsection{Architectures of Feature Extractor, Domain Discriminator, and Damage Predictor}\label{sec:arch}

Our deep adversarial domain adaptation framework includes three main components: feature extractor ($E$), damage predictor ($M$), and domain discriminator ($D$), as Figure~\ref{fig:framework_arch} shows. 
With the extracted frequency information of structural vibration responses (shown in Section~\ref{sec:preprocess}) as input, the feature extractor projects the input to a high-dimensional space to obtain the domain-invariant representations of structural responses. \textcolor{black}{We denote the extracted feature space as $\pazocal{Z}$. In the training stage, the extracted source feature $z_{S_i}$ is input to both domain discriminator and damage predictor, and  the target feature $z_{T}$ is input to domain discriminator, where $z_{S_i}, z_T \in \pazocal{Z}$.} Damage predictor is used to learn the optimal mapping from the extracted feature representations to damage states. Domain discriminator includes $n$ sub-classifiers. The $i$th sub-classifier focuses on distinguishing the difference between extracted source and target features. \textcolor{black}{For simplicity, we use a more general notation $\pazocal{F}$ to represent the mapping function of the three main components. We denote $\pazocal{F}_{E}(\cdot; \theta_e): \pazocal{X} \rightarrow \pazocal{Z}$ as the feature extractor neural network ($E$) with parameters $\theta_e$. Similarly, $\pazocal{F}_{M}(\cdot; \theta_m): \pazocal{Z}\rightarrow \pazocal{Y}$ refers to damage predictor $M$ with parameters $\theta_m$ which map extracted features to damage states (we assume there are $K$ damage states in total). $\pazocal{F}_{D_i}(\cdot; \theta_{d_i}): \pazocal{Z} \rightarrow \pazocal{C}$ refers to the $i$th domain discriminator $D_i$ with parameters $\theta_{d_i}$ mapping the extracted features to domain labels. $D_i$ takes extracted features from
the source domain $S_i$ and the target domain $T$. The domain label space $\pazocal{C}$ has two labels: $1$ represents that the sample comes from some source domain, $0$ represents that the sample is from the target domain.}

Our objectives are (1) to find an optimal feature extractor such the extracted features are domain-invariant as well as discriminative with respect to various damage states, and (2) to find an optimal damage predictor with the extracted features as input. To achieve  objective (1), our algorithm optimizes feature extractor and domain discriminator in an adversarial way, such that even the best-trained domain discriminator cannot tell the difference between the extracted features across source and target domains. To achieve objective (2), our algorithm jointly optimizes the damage predictor together with the feature extractor to build accurate the structural damage prediction model given extracted feature representations. In this way, we can find an optimal trade-off between the domain-invariance and discriminativeness of extracted features, and thus successfully adapt the knowledge learned from other buildings to help diagnose the target building of interest. \textcolor{black}{When there is partially labeled target data available, for example, undamaged labels collected from different earthquake excitation, the framework could be directly adopted by jointly training the damage predictor using both labeled source data and partially labeled target data.}

The architectures of feature extractor, domain discriminator and damage predictor are all based on deep convolutional neural network (CNN). In recent years, deep learning techniques benefit many applications in engineering fields. By constructing neural networks with deep and special architectures, we can approximate a wide range of highly non-linear and complex mapping functions~\cite{lecun2015deep}. A variety of deep neural networks have been proved to be effective and powerful in many real-world tasks, including structural health monitoring~\cite{jang2019deep,ye2019review,bao2019computer}, computer vision~\cite{krizhevsky2012imagenet}, natural language processing~\cite{devlin2018bert}, medical imaging~\cite{smailagic2018medal, smailagic2019medal} and video game~\cite{mnih2013playing}. Some typical architectures of deep neural networks include deep convolutional neural network (CNN)~\cite{krizhevsky2012imagenet} and recurrent neural network~\cite{mikolov2010recurrent}. Among these architectures, deep convolutional neural network combines convolutional layers to understand the local structures of features in various resolutions and thus becomes very powerful to learn effective representations from complex data.

\begin{table}[ht]
\caption{Architecture for 5-class damage quantification}
\begin{center}
\begin{tabular}{rccccc}
\hline
Networks&Operation & Kernel & Strides & Feature maps & Activation\\
\hline
\hline
\multirow{5}{*}{Feature Extractor}&
Input&  & &$3\times 1000 \times 1$&\\
&Convolution & $81\times 5\times 1$ & $2\times 1$ &$81\times 499 \times 1$&LeakyReLU\\
&Convolution & $81\times 5 \times 1$ & $2\times 1$ &$81\times 248 \times 1$&LeakyReLU\\
&Convolution & $81\times 3 \times 1$ & $2\times 1$ &$81\times 124 \times 1$&LeakyReLU\\
&Convolution & $81\times 3 \times 1$ & $2\times 1$ &$81\times 61 \times 1$& \\
\hline
\multirow{5}{*}{Damage Predictor}
&Convolution & $243\times 3\times 1$ & $2\times 1$ &$243\times 30 \times 1$&LeakyReLU\\
&Convolution & $81\times 3\times 1$ & $1\times 1$ &$81\times 29 \times 1$&LeakyReLU\\
&Convolution & $27\times 3\times 1$ & $1\times 1$ &$27\times 28 \times 1$&LeakyReLU\\
&Flatten&&&756&\\
&Full connection&&&5&Softmax\\
\hline
\multirow{2}{*}{\shortstack[r]{The $i$th \\ Domain Discriminator}}
&Flatten&&&4941&\\
&Full connection&&&2&Softmax\\
\hline
\end{tabular}
\end{center}
\label{table: Architecture}
\end{table}

\emph{Feature Extractor:}
Feature extractor focuses on extracting domain-invariant and discriminative features for all source and target domains. The feature extractor take the stacked frequency information from the structural responses and respective ground motion as input.  With three channels including ground motion, floor and ceiling vibration in the input, we use multiple convolutional layers with varying kernel sizes to enlarge the number of channels to discover the detailed local correlation between the frequencies of ground motion and floor vibrations, as well as to extract the combinations of frequency energy with varying resolutions. The selection of kernel size obeys the rule of thumb that larger size at the bottom layers and smaller size at the top layers, which can better capture the information at different resolution scales~\cite{lecun1998gradient,zeiler2014visualizing,luo2016understanding}. The output of feature extractor is an input to both the damage predictor and the domain discriminator. 

The optimization of the feature extractor takes the gradient information from both domain discriminator and damage predictor. This optimization is a trade-off between domain-invariance and damage discriminativeness. The extracted features are designed to improve the performance of the damage predictor and to be highly domain-invariant such that even optimal domain discriminator cannot distinguish which domain they come from. To achieve this, we use the newly designed physics-guided loss function to train the feature extractor via gradient reversal, which is introduced in Section~\ref{sec:loss}.

\emph{Damage Predictor:}
The damage predictor is designed to ensure the discriminativeness of the extracted features. It takes the extracted features as inputs and classifies the samples into different damage states. The basic intuition for the damage predictor is to model the mapping from extracted features to damage states. A well-trained damage predictor back-propagates as much sufficient information as possible to the feature extractor, to improve the discriminativeness of extracted features. When designing the damage predictor, the expressive power of the model needs to be ensured to model highly non-linear functional relationship. However, the damage predictor cannot be too deep, otherwise the vanishing gradient back-propagated to the feature extractor no longer provides any effective information about the damage patterns. According to our empirical study, we design the damage predictor as $3$ convolutional layers with shrinking kernel size followed by $1$ fully connected layer. 

\emph{Domain Discriminator:}
Domain discriminator examines the domain-invariance of the extracted features.
Each domain discriminator $D_i$ takes only the extracted features from the source domain $S_i$ or the target domain $T$ as inputs, and classifies the samples into $2$ classes: the sample comes from the source domain $S_i$ or the sample comes from the target domain $T$. We have multiple sub-classifiers inside the domain discriminator to ensure the consistent distribution between extracted features of each source domain and the target domain. 
The domain discriminator is trained adversarially with the feature extractor, such that the domain discriminator which can best distinguish the domain-difference between features is still confused by the domains of the feature extractor's outputs. The design of domain discriminator should be simpler than that of the damage predictor -- otherwise, the domain discriminator will dominate the training process and prevent the damage discriminativeness of the extracted features.

Table~\ref{table: Architecture} provides the details of architecture we used in our experiment. The architecture may change with different knowledge transfer tasks. For example, for a binary damage detection task, the depth of feature extractor and damage predictor could be reduced since the damage detection task is relatively simpler than the damage quantification task. 

\subsubsection{Physics-guided Loss Function for Adversarial Domain Adaptation} \label{sec:loss} 

To best learn domain-invariant feature representations, we train the feature extractor and domain discriminator in an adversarial way. The key idea of adversarial training is to play a minimax two-player game between the feature extractor and domain discriminator. In detail, we first optimize the domain discriminator to best distinguish the domain-difference of any features output by the feature extractor, and then we optimize the feature extractor such that the domain discriminator is not able to differentiate features as from the source or target domain. By alternatively optimizing the feature extractor and domain discriminator, the adversarial training aims at finding a saddle point such that the two networks achieve an equilibrium. In this way, we obtain an optimal feature extractor to best learn the domain-invariant feature representations. In recent years, adversarial learning has attracted much attention in machine learning communities since it avoids complex sampling and inference and enables an efficient learning of generative models~\cite{goodfellow2014generative, tzeng2017adversarial}. Studies based on adversarial learning, such as Generative Adversarial Nets (GANs) and adversarial defenses, have led to significant advances in generative learning and robustness of deep neural networks. 

In this section, we first present the adversarial loss we designed for model optimization. We then provide theoretical insights behind the physics-guided weight design. As an important component of the loss function design, our physics-guide weight is designed to provide a better guarantee for damage prediction error.

\vspace{0.2cm}
{ \noindent \emph{I. Physics-guided Adversarial Loss} \par} 
\vspace{0.1cm}

A key challenge is the loss function design. The optimization objective is to simultaneously (1) minimize the loss of damage prediction by optimizing both feature extractor and damage predictor, (2) minimize the loss of domain discrimination by optimizing domain discriminator, and (3) maximize the loss of domain discrimination by optimizing the feature extractor. Different from the two-player game in conventional adversarial learning, we also have damage predictor to ensure the damage discriminativeness of the extracted features, which makes the optimization problem a three-player game. Besides, each module is composed of a deep neural network. Jointly optimizing the three non-convex complex networks with unknown landscape is a challenging task. Therefore, the loss function design is important to ensure the performance of the adversarial knowledge transfer framework. 

 We use $\pazocal{L}_M^i$ to represent the cross-entropy loss of using the extracted features to predict the damage states for source domain $S_i$. $\pazocal{L}_{D_i}$ represents the cross-entropy loss of distinguishing the extracted features from $S_i$ or $T$. $\lambda$ is defined as the factor to tune the trade-off between domain-invariance and discriminativeness of the extracted features. 

The feature extractor, damage predictor, and domain discriminator are optimized by solving the following optimization problem,
\begin{align}
   \min_{\theta_e, \theta_m}~~~\left[\sum_{i=1}^n w_i \pazocal{L}^i_M(\theta_e, \theta_m)- \min_{\theta_{d_1} ,\cdots, \theta_{d_n}}\lambda \sum_{i=1}^n w_i \pazocal{L}_{D_i}(\theta_e, \theta_{d_i})\right],\label{eq:loss}
\end{align}
where
\begin{align}
    \pazocal{L}^i_M(\theta_e, \theta_m)&=- \mathbb{E}_{(\mathbf{x}, y)\sim \pazocal{D}_{S_i}}\sum_{k=1}^K I(y=k)\log\pazocal{F}_M(\pazocal{F}_E(\mathbf{x})) \\
    \pazocal{L}_{D_i}(\theta_e, \theta_{d_i})& = -\mathbb{E}_{\mathbf{x}_s\sim \pazocal{X}_{S_i}}\left[\log \left(\pazocal{F}_{D_i}(\pazocal{F}_E(\mathbf{x}_s))\right) \right] -\mathbb{E}_{\mathbf{x}_t\sim \pazocal{X}_{t}}\left[\log\left(1- \pazocal{F}_{D_i}(\pazocal{F}_E(\mathbf{x}_t))\right) \right].
\end{align}
$w_i$ is the weight for source domain $i$, which is designed to balance the influence of different source buildings, as mentioned in Section~\ref{sec:formulation}. We introduce a new physics-guided weight design which measures the importance of each source building using their physical similarities to the target building. We present the details of this physics-guided weight in the next subsection.

The problem in~(\ref{eq:loss}) is a minimax problem. Since the goals of the feature extractor and of the domain discriminator are conflicting, we need to train them in an adversarial manner to find the saddle point $\hat{\theta}_e, \hat{\theta}_m, \hat{\theta}_{d_i}$ such that
\begin{align}
    (\hat{\theta}_e, \hat{\theta}_m) &= \arg\min \left[\sum_{i=1}^n w_i \pazocal{L}^i_M(\theta_e, \theta_m)- \min_{\theta_{d_1} ,\cdots, \theta_{d_n}}\lambda \sum_{i=1}^n w_i \pazocal{L}_{D_i}(\theta_e, \theta_{d_i})\right]\\
    (\hat{\theta}_{d_1}, \cdots, \hat{\theta}_{d_n}) &= \arg\min \lambda \sum_{i=1}^n w_i \pazocal{L}_{D_i}(\theta_e, \theta_{d_i}).
\end{align}
In practices, to find a stationary saddle point, the adversarial training is achieved by a non-trivial gradient-reversal layer. As Figure~\ref{fig:framework_arch} shows, the gradient-reversal layer connects the feature extractor and the domain discriminator. In back-propagation, it reverses the gradient by multiplying it by a negative scalar during back-propagation~\cite{ganin2016domain}. That is, during back-propagation, the neural networks follow the gradient updates as:
\begin{align}
    \theta_e &\leftarrow \theta_e - \delta \sum_i\left(\dfrac{\partial w_i \pazocal{L}_M^i}{\partial \theta_e} - \lambda \dfrac{\partial w_i \pazocal{L}_{D_i}}{\partial \theta_e}\right)\\
    \theta_m &\leftarrow \theta_m - \delta\sum_i\left(\dfrac{\partial w_i \pazocal{L}_M^i}{\partial \theta_m}\right)\\
    \theta_{d_i} &\leftarrow \theta_{d_i} - \delta\left(\lambda \dfrac{\partial w_i \pazocal{L}_{D_i}}{\partial \theta_{d_i}}\right)
\end{align}

\vspace{0.2cm}
{\noindent \emph{II. Theretical Insights behind Physics-guided Weights} \par} 
\vspace{0.1cm}
An important component of the loss function is the weight $w_i$. This weight decides how efficiently we can integrate the knowledge from different source domains. From the Equation~\ref{eq:1}, an optimal $w_i$ for different source domains can approximate the tightest upper bound of the target risk.  In previous work, people mostly set $w_i = 1/n$ where $n$ is the number of the source domains~\cite{zhao2018adversarial}. However, we show that $w_i =\dfrac{1/d_{\pazocal{H}\Delta \pazocal{H}}(\hat{S}_i, \hat{T})}{\sum_i 1/d_{\pazocal{H}\Delta \pazocal{H}}(\hat{S}_i, \hat{T})}$ results in a tighter bound, which provides an improved guarantee on the performance of knowledge transfer.


\begin{theorem}
Given $\hat{R}_{S_1}(h) = \cdots =  \hat{R}_{S_n}(h) \forall i$,
$w_i =\dfrac{1/d_{\pazocal{H}\Delta \pazocal{H}}(\hat{S}_i, \hat{T})}{\sum_i 1/d_{\pazocal{H}\Delta \pazocal{H}}(\hat{S}_i, \hat{T})}$ results in a tighter upper bound for the target risk  $R_{T}(h)$ than $w_i = 1/n$.
\end{theorem}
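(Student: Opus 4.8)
\emph{Proof proposal.} The plan is to isolate how the right-hand side of~(\ref{eq:1}) depends on the weight vector $w=(w_1,\dots,w_n)$ and then reduce the comparison to a classical inequality between means. First I would note that in
\[
R_T(h) \le \sum_i w_i\bigl(R_{S_i}(h) + d_{\pazocal{H}\Delta\pazocal{H}}(\hat S_i,\hat T)\bigr) + \lambda_w + \pazocal{O}\!\left(\sqrt{\tfrac{1}{km}\bigl(d\log\tfrac{km}{d}+\log\tfrac1\delta\bigr)}\right),
\]
the sample-complexity term is independent of $w$ and can be dropped from the comparison. Writing $d_i := d_{\pazocal{H}\Delta\pazocal{H}}(\hat S_i,\hat T) > 0$ and invoking the hypothesis $\hat R_{S_1}(h)=\cdots=\hat R_{S_n}(h)=:R_0$, the discriminative part becomes $\sum_i w_i R_{S_i}(h) = R_0\sum_i w_i = R_0$ for \emph{every} convex combination $w$. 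So, setting $\lambda_w$ aside for the moment, the only genuinely $w$-dependent quantity to control is $\Phi(w) := \sum_i w_i d_i$.

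Next I would simply evaluate $\Phi$ at the two weightings. For the uniform choice $w_i = 1/n$ we get $\Phi_{\mathrm{unif}} = \tfrac1n\sum_i d_i$, the arithmetic mean of the divergences; for the proposed choice $w_i = (1/d_i)\big/\sum_j(1/d_j)$ we get $\Phi_{\mathrm{phys}} = \sum_i \tfrac{1/d_i}{\sum_j 1/d_j}\, d_i = \tfrac{n}{\sum_j 1/d_j}$, which is exactly the harmonic mean of the $d_i$. The AM--HM inequality then gives $\Phi_{\mathrm{phys}} \le \Phi_{\mathrm{unif}}$, with equality if and only if all $d_i$ coincide. Substituting back, the physics-guided bound is no larger than the uniform one, and strictly smaller whenever the source--target divergences are not all equal, which is precisely the assertion.

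The hard part will be handling $\lambda_w$, which (following Zhao et al.) has the form $\lambda_w = \inf_{h^\star}\bigl[\sum_i w_i R_{S_i}(h^\star) + R_T(h^\star)\bigr]$ and hence also varies with $w$. I would deal with it in one of two ways. Either (i) treat it as the customary ``adaptability of the ideal joint hypothesis'' quantity that is assumed small and stable across reasonable weightings, so that the comparison of the explicit bound terms is the operative statement; or, more carefully, (ii) argue that in the regime where the theorem's hypothesis $\hat R_{S_1}(h)=\cdots=\hat R_{S_n}(h)$ is meaningful, the ideal joint hypothesis $h^\star$ also realizes (approximately) equal source risks, whence $\sum_i w_i R_{S_i}(h^\star)$ is again weight-invariant and $\lambda_w$ drops out of the comparison entirely. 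Under either route the AM--HM step is the crux and the rest is bookkeeping. As a side remark, one could sharpen the statement by bounding the gap $\Phi_{\mathrm{unif}} - \Phi_{\mathrm{phys}}$ in terms of the spread (e.g.\ the variance) of the $d_i$, making precise how much is gained from source heterogeneity.
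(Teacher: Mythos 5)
Your proposal is correct and follows essentially the same route as the paper: with equal empirical source risks the risk term is weight-invariant, and the comparison reduces to the harmonic versus arithmetic mean of the divergences $d_{\pazocal{H}\Delta\pazocal{H}}(\hat{S}_i,\hat{T})$, which the paper establishes via Cauchy--Schwarz (equivalent to your AM--HM step). Your discussion of the $w$-dependence of $\lambda_w$ is actually more careful than the paper, whose proof silently compares only the $\sum_i w_i\bigl(\hat{R}_{S_i}(h)+d_{\pazocal{H}\Delta\pazocal{H}}(\hat{S}_i,\hat{T})\bigr)$ terms.
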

\begin{proof}
Given that $R_{S_i}(h) $ depends on the expressive power of the shared damage classifier $h= \pazocal{F}_M(\pazocal{F}_E(\cdot;\theta_e);\theta_m)$, we assume that the classifier is optimally trained and over-fitted on the source data and the source risks are equal to one another, i.e., $\hat{R}_{S_1}(h) = \cdots =  \hat{R}_{S_n}(h) \forall i$. Therefore, for $d_{\pazocal{H}\Delta \pazocal{H}}(\hat{S}_i, \hat{T}) > 0$,
\begin{align*}
    \sum_i &\dfrac{1/d_{\pazocal{H}\Delta \pazocal{H}}(\hat{S}_i, \hat{T})}{\sum_i 1/d_{\pazocal{H}\Delta \pazocal{H}}(\hat{S}_i, \hat{T})}\left( \hat{R}_{S_i}(h) +  d_{\pazocal{H}\Delta \pazocal{H}}(\hat{S}_i, \hat{T}) \right) \\
    & = \sum_i \dfrac{1/d_{\pazocal{H}\Delta \pazocal{H}}(\hat{S}_i, \hat{T})}{\sum_i 1/d_{\pazocal{H}\Delta \pazocal{H}}(\hat{S}_i, \hat{T})} \hat{R}_{S_i}(h) + \dfrac{n}{\sum_i 1/d_{\pazocal{H}\Delta \pazocal{H}}(\hat{S}_i, \hat{T})}\\
    &\leq  \hat{R}_{S_i}(h) + \sum_i\dfrac{1 }{n}d_{\pazocal{H}\Delta \pazocal{H}}(\hat{S}_i, \hat{T})\forall i \text{ (based on Cauchy-Schwarz inequality)}\\
    &= \sum_i \dfrac{1}{n} \left(\hat{R}_{S_i}(h) + d_{\pazocal{H}\Delta \pazocal{H}}(\hat{S}_i, \hat{T})\right).
\end{align*}
It has been shown that $\sum_i w_i \left(R_{S_i}(h) + d_{\pazocal{H}\Delta \pazocal{H}}(\hat{S}_i, \hat{T})\right)$ decides the upper bound of the target risk, i.e., $R_{T}(h)$ in Equation~\ref{eq:1}. Therefore, $w_i =\dfrac{1/d_{\pazocal{H}\Delta \pazocal{H}}(\hat{S}_i, \hat{T})}{\sum_i 1/d_{\pazocal{H}\Delta \pazocal{H}}(\hat{S}_i, \hat{T})}$ renders a tighter upper bound of the target risk.
\end{proof}
Therefore, when designing the loss function for the framework, it is necessary to weigh the loss of different domains unequally to achieve better knowledge transfer performance. However, as we have no labeled data for the target building. it is unlikely to directly estimate the target data distribution and further measure the empirical distribution divergence $ d_{\pazocal{H}\Delta \pazocal{H}}(\hat{S}_i, \hat{T})$. 

One solution to approximate the data distribution divergence is to embed physical understanding of earthquake-induced structural damage patterns variations. In Sections~\ref{sec:challenge} and \ref{sec:formulation}, we show that the divergence between two buildings' sample data distributions is related to the similarity between the two building's physical properties. thus, we use $d_{\pazocal{H}_u\Delta \pazocal{H}_u}(\pazocal{U}_{S_i},\pazocal{U}_T)$ to indicate the relative changes of $d_{\pazocal{H}\Delta \pazocal{H}}(\hat{S}_i, \hat{T})$. The optimal weight for the $i$th domain, $w_i$, is designed as follows,
\begin{align}
    w_i &\propto \dfrac{f(1/d_{\pazocal{H}_u\Delta \pazocal{H}_u}(\pazocal{U}_{S_i},\pazocal{U}_T))}{\sum_i f(1/d_{\pazocal{H}_u\Delta \pazocal{H}_u}(\pazocal{U}_{S_i},\pazocal{U}_T))},
\end{align}
where $f$ is a non-decreasing function to model the relationship between $d_{\pazocal{H}_u\Delta \pazocal{H}_u}(\pazocal{U}_{S_i},\pazocal{U}_T)$ and $d_{\pazocal{H}\Delta \pazocal{H}}(\hat{S}_i, \hat{T})$.

Intuitively, the knowledge learned from the similar buildings should be (1) more efficiently transferred to the target building, and (2) more informative and indicative of the damage prediction on the target building. Therefore, we assign higher weights to the source domains which have more similar physical properties in the loss design. 

In practices, most available physical knowledge about buildings are simple and fuzzy physical properties. Denote $U_{S_i}$ and $U_T$ as the known physical knowledge about the $i$th source building and the target building $T$. From empirical study, we found that taking $f$ as exponential functional family performs best. Therefore, we calculate $w_i$ by taking the reciprocal of the similarity between  $U_{S_i}$ and $U_T$, and then normalize $w_i$ across all source domains using softmax function as 
\begin{align}
    w_i &= \dfrac{\exp{\left[1/dist(U_{S_i}, U_{T})\right]}}{\sum_i \exp{\left[1/dist(U_{S_i}, U_{T})\right]}}.\label{eq:w1}
\end{align}
\textcolor{black}{Softmax normalization is a classical way to normalize the weights in reweighting schemes, for example, attention mechanisms~\cite{vaswani2017attention}. Compared to linear transformation, it is known as a smoother approximation to the maximum function, giving the most important source domains larger weights \cite{goodfellow2016deep}. Meanwhile, with exponential functions, the outputs of softmax normalization are more stable to variation of inputs~\cite{goodfellow2016deep}.Therefore, we use softmax to conduct the weight normalization.} By using softmax normalization, we can smooth the influence of physical properties' variations. The physical knowledge is combined to regularize the domain-invariant and discriminative feature extraction. \textcolor{black}{ The similarities between different buildings' physical properties is approximated by the similarities between known physical parameters can be obtained (e.g., building heights, number of building stories, span of bridge, first-mode periods of bridge, designed strength of bridge, and other quantified physical properties of mechanical systems)}. The key is the insight that using physical similarities guides the knowledge transfer between buildings.  We define the similarity as a function of structural properties:
\begin{align}
    dist(U_{S_i}, U_{T}) = \left(1-U_{S_i}/U_T\right)^2 +\epsilon,\label{eq:w2}
\end{align}
where $\epsilon$ is a smoothing factor to avoid $dist = 0$. \textcolor{black}{The first principle to select physical parameters is that there should be existing evidence showing these properties would significantly impact the data distributions, for example, structural damage patterns in our earthquake-induced building damage diagnosis scenario. This knowledge could be obtained through empirical studies and previous experiments. Besides, the selection of weights is also related to whether the weight distribution weighs different source domains in a reasonable way and whether the validation results are good.  As mentioned in Section~\ref{sec:challenge}, how building physical property influences data distribution is a complex process. In practices, limited physical knowledge may constrain the accuracy to approximate real distribution divergence. However, this rough estimation is still helpful to leverage different source domains and guide the training process. Besides, data-driven model also automatically updates their parameters to complement insufficient divergence approximation by physics-guided weights. }



\section{Evaluation}\label{sec:eval}
In this section, we evaluate our framework on both simulation and real-world experimental earthquake-induced building vibration datasets. \textcolor{black}{The main objective is to integrate and transfer the knowledge learnt from multiple source buildings under different earthquakes to help diagnose a new building's earthquake-induced structural damages. The evaluations are twofold. We first evaluate the framework on transferring knowledge across five different buildings through numerical simulations. In each experiment, we select one building as the target building and the remaining as source buildings. The experiments are repeated for five times to evaluate the performance of our framework on each different target building. Next, we further evaluate the framework on transferring from numerical simulation data to real-world experimental data. In this experiment, five simulated buildings are source buildings, and a real-world experimental structure subjected to a series of earthquake ground motions is the target building.}  
In this section, we first give a brief description about the datasets in Section~\ref{sec:data}. Then we describe the baseline methods for comparison (Section~\ref{sec:base}), the knowledge transfer performance on simulation data (Section~\ref{sec:simulate}) and experimental dataset (Section~\ref{sec:realdata}). Finally, we characterize the training process and discuss the effect of hyperparameter $\lambda$ (Section~\ref{sec:hyperparam}).

\subsection{Data Description}\label{sec:data}
To evaluate the performance of our algorithm, we first transfer the knowledge between simulation data, and then transfer the knowledge learned from simulation data to diagnose the structural damages in shake-table experiments. 

\textit{Simulation Data:} To understand the building structural damage patterns under earthquakes, our building response database consists of a wide range of archetype steel frame buildings with MRFs~\cite{elkady2014modeling,elkady2015effect,hwang2017earthquake,hwang2018nonmodel}. The archetypes we used include 2-story, 4-story, 8-story, 12-story and 20-story building with a first-story height of 4.6m and a typical story height of 4m. The steel MRFs of these archetypes are designed with three strong-column/weak-beam (SCWB) ratios of 1.0~\cite{medina2004seismic}. The researchers utilize a suite of ground motions with large moment-magnitude ($6.5\leq M_w\leq 7$) and short closest-to-fault-rupture distance ($13km <R_{rup}<40km$). These ground motions are collected from 40 observation stations during 5 previous earthquake events. To simulate the building responses, two-dimensional nonlinear model representations of all the archetype MRFs are developed in the Open System for Earthquake Engineering Simulation (OpenSEES) Platform~\cite{mckenna1997object,hwang2018nonmodel}, and incremental dynamic analysis (IDA)~\cite{vamvatsikos2002incremental} is performed. The floor accelerations and story drift ratios are recorded with sampling rates of $50Hz \sim 200Hz$ for different ground motions over a wide range of incremental factors. \textcolor{black}{ The sampling rates are not consistent because (1) real-world earthquake ground motions are utilized and their sampling rates are different and (2) in the OpenSEES platform, a special collapse subroutine is used to avoid the non-convergence problem, which causing irregular sampling rates. More detailed about the simulation data could be found in~\cite{elkady2014modeling, elkady2015effect, hwang2017earthquake, hwang2018nonmodel}. }

\textit{Experimental data: } A series of shake table tests of a 1:8 scale model for a 4-story steel MRFs are conducted at the State University of New York at Buffalo~\cite{lignos2008shaking}. The structure is subjected to a series of the scaled 1994 Northridge earthquake ground motions recorded at the Canoga Park, CA, station. The scale factor ranges from 0.4 to 1.9. Accelerometers and displacement meters are instrumented on the structure to record the structural responses and story drift ratios 128Hz. 

\textit{True damage label determination: } According to the current standard (FEMA P695)~\cite{prestandard2000commentary,applied2005improvement,ramirez2009building, kircher2010evaluation}, structural damage states are defined based on the ground-truth peak story drift ratio at each story. For the damage detection task, we divide the damage state into no damage ($SDR \in [0, 0.01)$) and damaged($SDR \in [0.01,+\infty)$). For the task of damage quantification, we use $5$ damage states: no damage ($SDR \in [0, 0.01)$),  slight damage ($SDR \in [0.01, 0.02)$), moderate damage ($SDR \in [0.02, 0.03)$), severe damage ($SDR \in [0.03, 0.06)$),  and collapse ($SDR \in [0.06,+\infty)$).

		
\begin{figure}[htp!]
\begin{center}
\begin{subfigure}{0.505\textwidth}
	\centering 
		\includegraphics[scale=0.0445]{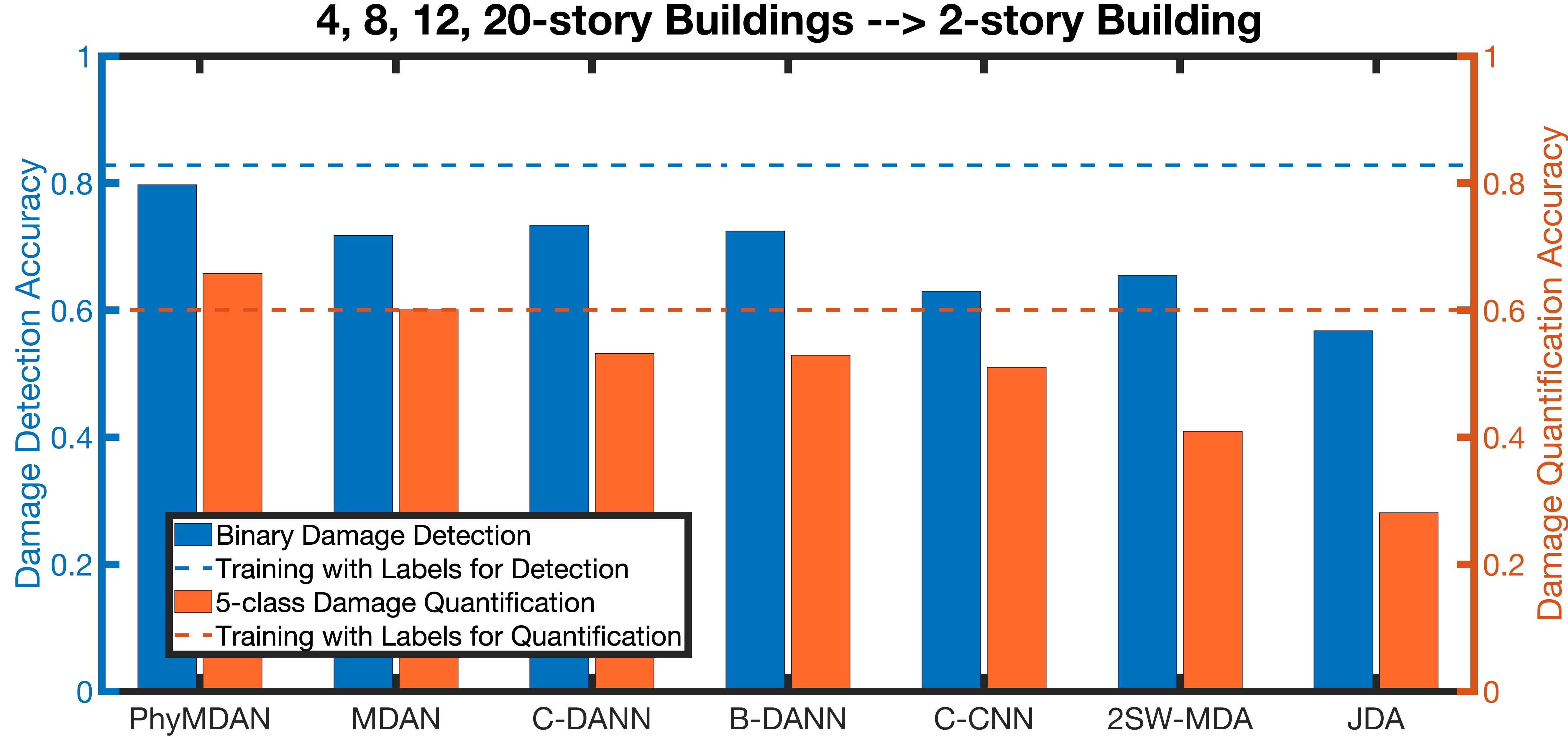}
		\caption{}
		\label{fig:2a}
	\end{subfigure}
	\begin{subfigure}{0.48\textwidth}
	\centering 
		\includegraphics[scale=0.0445]{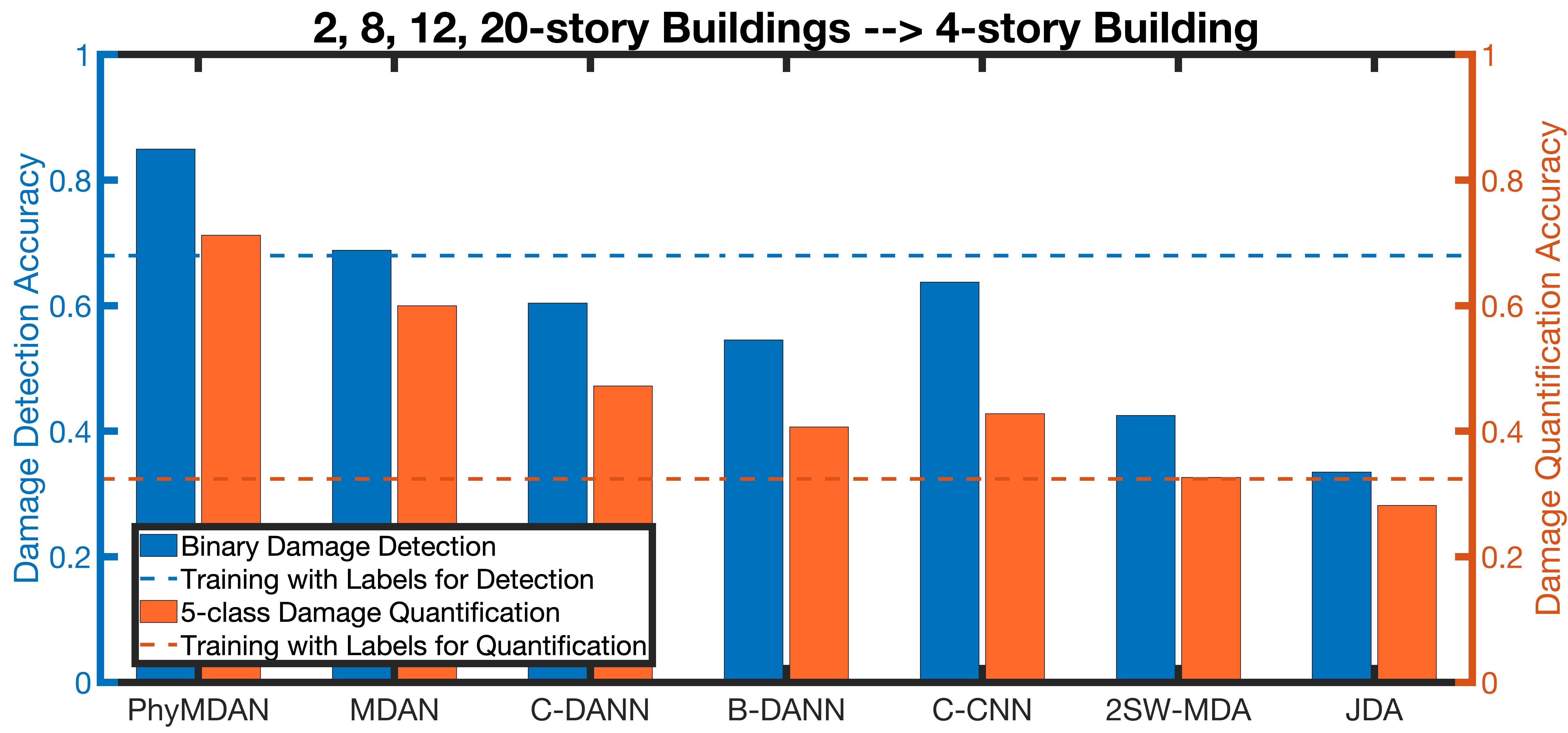}
		\caption{}
		\label{fig:2b}
	\end{subfigure}
	\begin{subfigure}{0.504\textwidth}
	\centering 
		\includegraphics[scale=0.045]{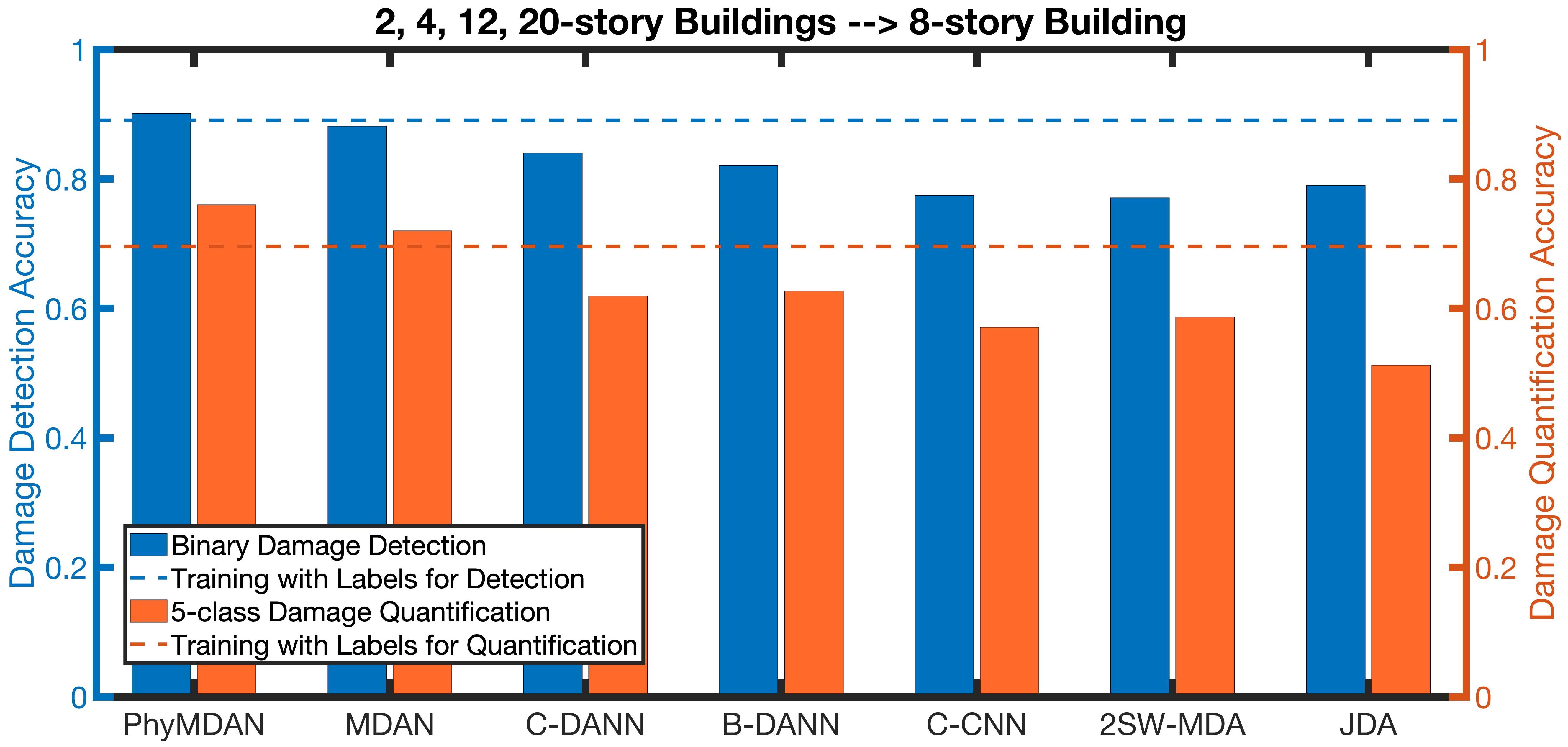}
		\caption{}
	\label{fig:2c}
	\end{subfigure}
	\begin{subfigure}{0.48\textwidth}
	\centering 
		\includegraphics[scale=0.055]{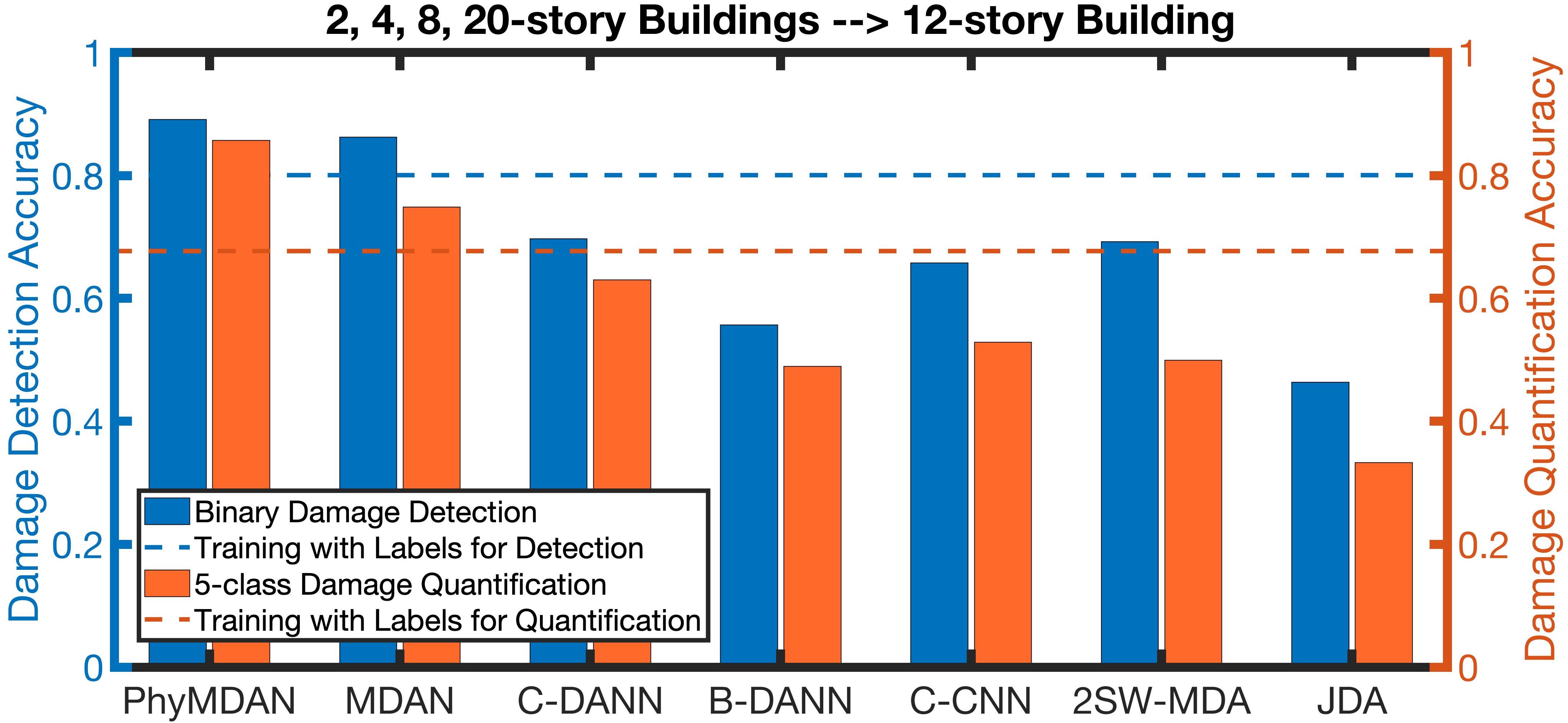}
		\caption{}
		\label{fig:2d}
	\end{subfigure}
	\begin{subfigure}{\textwidth}
	\centering 
		\includegraphics[scale=0.052]{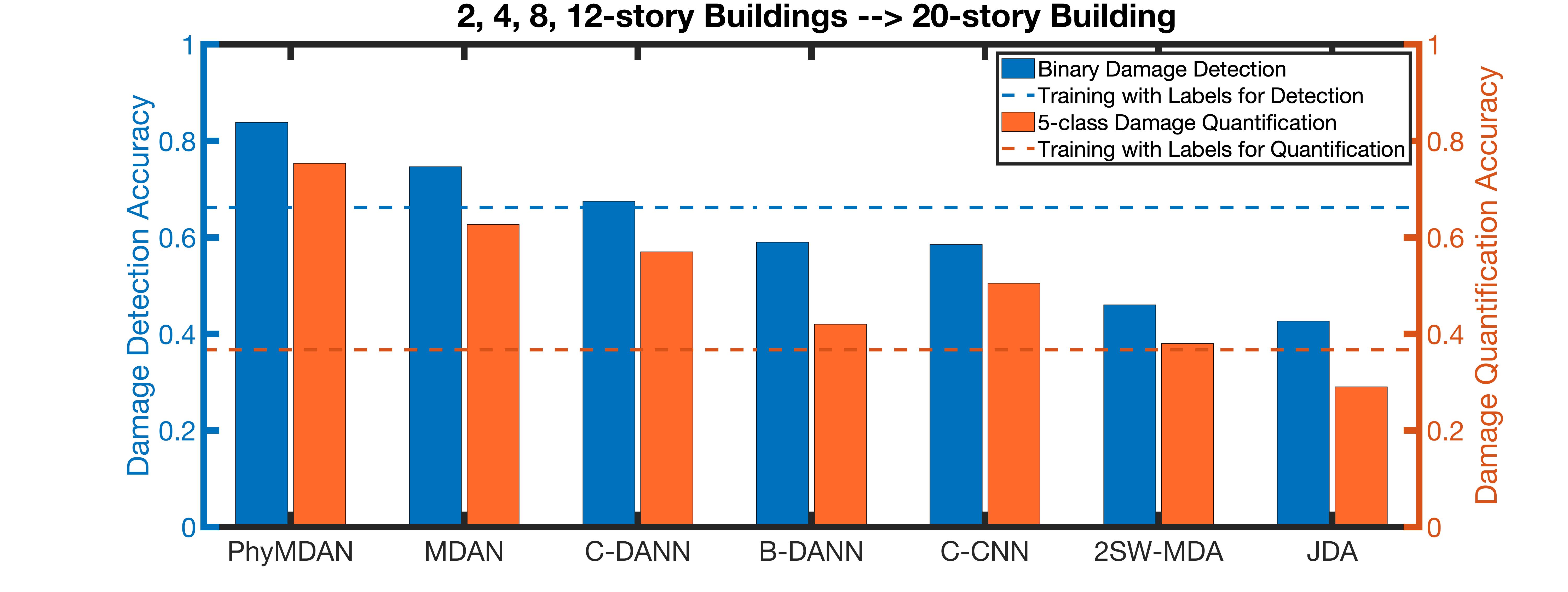}
		\caption{}
		\label{fig:2e}
	\end{subfigure}
	\caption{(a), (b), (c), (d), (e) compares the performance of knowledge transfer from other four buildings to the 2, 4, 8, 12, and 20-story building, separately, using our framework and other benchmark methods. The results include binary damage detection (blue) and 5-class damage quantification (orange). The dotted lines represent the damage prediction accuracy when directly training using the target domain labeled data as reference (ideal case). The results show that our framework outperform other methods.}
	\label{fig:result2}
\end{center}
\end{figure}
\subsection{Benchmark Methods}\label{sec:base}
\textcolor{black}{We compare the performance of our framework with seven other different methods, including Multi-source Domain Adversarial Networks (MDAN)~\cite{zhao2018adversarial}, Domain Adversarial Neural Network with Combined multiple source domains as one single source domain (C-DANN)~\cite{ganin2016domain}, Domain Adversarial Neural Network with the Best single source domain (B-DANN)~\cite{ganin2016domain}, Convolutional Neural Network with Combined multiple source domains as training set (C-CNN), Two-Stage Weighting Multi-source Domain Adaptation  (2SW-MDA)~\cite{sun2011two}, Joint Distribution Adaptation with the Best single source domain (JDA)~\cite{long2013transfer} and directly training with target labels}. MDAN is a multiple source adversarial domain adaptation method with uniform weight across different source domains. We use MDAN as one baseline method to show the model performance improvement due to our physics-guided loss design.  C-DANN combines the data from all source domains as single source domain dataset, and use single source domain adaptation method named as Domain Adversarial Neural Network (DANN)~\cite{ganin2016domain} to transfer knowledge from single source domain to the target domain. B-DANN transfers the knowledge from each source domain to the target domain using DANN~\cite{ganin2016domain}, and then selects the one with the best performance to report. C-DANN and B-DANN have the exactly same architectures of feature extractor and damage predictor with our framework. The only difference is that there is only one classifier in domain discriminator due to a single source domain. C-CNN directly uses deep convolutional neural network to train on the combined source data and predict on the target domain. The architecture for C-CNN is the combination of the architectures of the feature extractor and damage predictor in our framework mentioned in Section~\ref{sec:framework}. \textcolor{black}{Two-Stage Weighting Multi-source Domain Adaptation (2SW-MDA) is a classical multi-source domain adaptation method by reweighting the instances and source domains. Joint Distribution Adaptation (JDA) is a classical single-source domain adaptation aligning the joint distribution of source and target domains. When implementing JDA, we use support vector machine as the base classifier. We select one source domain to conduct the single-source-single-target domain adaptation first, and report the best performance across different source domains.} \textcolor{black}{We also evaluate the performance of directly training and testing on the target domain using deep convolutional neural network given ground-truth labels for target domain data (ideal case, supervised learning without using any source domain data)}. 

\vspace{-0.3cm}
\subsection{Knowledge Transfer Across Different Buildings on Simulation Data}\label{sec:simulate}
\begin{figure}[htp!]
\begin{center}
	\begin{subfigure}{0.49\textwidth}
		\includegraphics[scale=0.35]{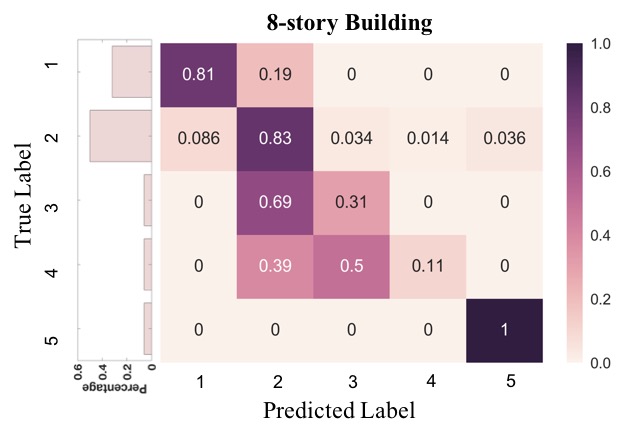}
		\caption{}
		\label{fig:3a}
	\end{subfigure}
	\begin{subfigure}{0.49\textwidth}
		\includegraphics[scale=0.35]{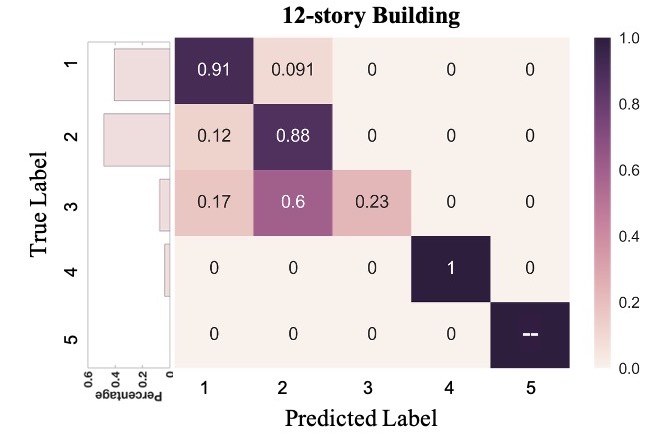}
		\caption{}
		\label{fig:3b}
	\end{subfigure}
	\caption{(a) Confusion matrix of the 5-class damage quantification result for knowledge transfer from other buildings to the 8-story building. (b) Confusion matrix of the 5-class damage quantification result for knowledge transfer from other buildings to the 12-story building. The left of each confusion map shows the density histogram of each damage class in the respective building's dataset.}
	\label{fig:result3}
\end{center}
\vspace{-0.3cm}
\end{figure}

We first evaluate the performance of our framework on knowledge transfer across different buildings using simulation data. Two different tasks, binary damage detection and $5$-class damage quantification, are conducted. 

To prepare the training and testing datasets for simulation data, we first conduct linear interpolation to align the timestamps of floor vibrations and impute the missing data for each building. 
We then conduct data augmentation as buildings have very limited structural response datasets. Based on the observations on the data, we take the sliding window length varying between $t-2.5$ seconds to $t$ seconds where $t$ is the time length of the raw vibration signal. We take the stride of sliding the window as $0.25$ seconds. 
Based on the selected data, we conduct Fast Fourier Transform (FFT) on the ground motion, the floor acceleration and the ceiling acceleration for each story-wise data sample. In our experiment, we take the the spectrum with frequency lower than $26$Hz, which corresponds to a $1000\times 1$ vector for each vibration signal. We choose lower frequency band since most fatal structural damages are induced by low-frequency seismic waves. The peak story drift ratio is simultaneously quantized into damage classes as ground-truth labels. We organized the data for each building by story level. \textcolor{black}{Before data augmentation, the $2,4,8,12,20$-story buildings have $241,195,382,283,154$ labeled data samples at each story, respectively. After data preprocessing and data augmentation, we finally obtain the datasets of the $2$, $4$, $8$, $12$, and $20$-story buildings containing $2600, 1400, 2500, 1750, 750$ labeled data samples at each story, respectively. }

We conduct separate experiments which transfer knowledge to each of $2$, $4$, $8$, $12$, and $20$-story building from the remaining buildings. \textcolor{black}{The final goal of this knowledge transfer is to classify story-level structural damage states}. \textcolor{black}{As for physics-informed weights, we pick one of the most commonly available physical properties, heights, as $U$ to calculate the weight distributions in Equation~\ref{eq:w2}. The justification of using weights is included in Section~\ref{sec:weight}}. As an example, we show the performance of our framework on transferring knowledge across the $2$nd floor of each building. We present the performance at the $2$nd floor since: 1) all the $5$ buildings have the $2$nd floor, and 2) as Figure~\ref{fig:psdr_sf} and Figure~\ref{fig:psdr_story} shows, the data distribution changes in lower floors are more complex than higher levels, which makes their analysis more challenging. 
\begin{figure}[htp!]
\begin{center}
	\centering
		\includegraphics[scale=0.06]{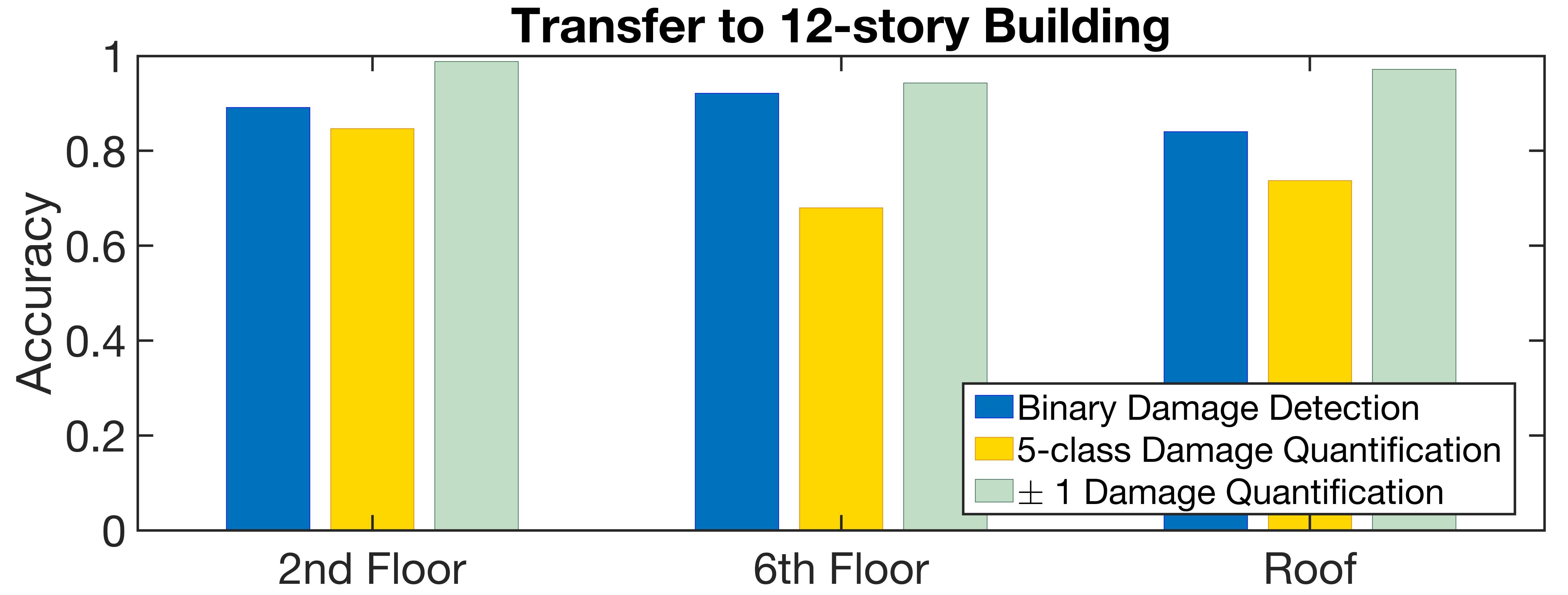}
		\caption{Performance of our framework to transfer knowledge from other buildings to the 12-story building across the 2nd story, 6th story and the roof story. The results on damage detection (blue), damage quantification (yellow) and $\pm 1$ damage quantification (green) are presented.}
		\label{fig:12_Story_floor}
\end{center}
\end{figure}

\textcolor{black}{The model training includes two phases: (1) finding the optimal hyperparameters, (2) training on the whole dataset. During the parameter tuning phase, we conduct 10-fold cross validation to find the best hyperparameters. All the source datasets are randomly split into ten folds. Hence, in each iteration of the cross validation, we have a source training set with size of 90\% and a source validation set with size of 10\%. The source validation set is used to validate the damage predictor and domain discriminator. Note that we do not need a testing set of source data, as our main goal is to predict the damage of the target building and source validation set is sufficient. The target data is also randomly split into 90\% training sets and 10\% validation sets. The target training set is used to train both damage predictor and domain discriminator. The target validation set is used to validate the domain discriminator. We do not need a testing set for domain discriminator as our main goal is to find the optimal feature extractor and damage predictor. After the key hyperparameter, $\lambda$, is selected, we input all source and target domain data to get the optimal feature extractor and damage predictor. Note that in all phases, we do not have any target labels available. Since we conduct unsupervised domain adaptation for the target data, training and test data are not separate for target data, just like other unsupervised learning approaches, and essentially all the target data is our test data since our goal is to find the labels for all the target data.}

We design different architectures for binary damage detection task and $5$-class damage quantification task. The architecture for damage quantification is shown in Table~\ref{table: Architecture} with $\sim 250$K parameters. Since damage detection is an easier task, we reduce the number of convolutional layers in feature extractor and damage predictor to $3$ and $2$, respectively. To evaluate the best performance, when training the model, we use Adam as optimizer~\cite{kingma2014adam} with the momentum of $0.9$ and the weight decay rate of $1e-4$. We take  $\lambda$ as $0.01\sim 0.05$ for damage detection 
and $\lambda$ as $0.2\sim 0.5$ for damage quantification. \textcolor{black}{$\lambda$ is selected via grid search and 10-fold cross-validation based on total loss function, damage prediction accuracy on source domain validation data, and domain discrimination accuracy on both source and target domain validation data. In empirical studies, it is found that optimal lambda often results in a stable convergence and a high damage prediction performance on source domain validation data no matter how the dataset is split.} The initial learning rate is set as $0.005$ for damage quantification and $0.0002$ for damage detection with learning decay rate of $0.1$.

Figure~\ref{fig:result2} shows the performance of our framework on knowledge transfer across 5 buildings for different damage diagnosis tasks. Figures~\ref{fig:2a}, \ref{fig:2b}, \ref{fig:2c}, \ref{fig:2d} and \ref{fig:2e} show the results of knowledge transfer to the $2$, $4$, $8$, $12$, $20$-story building from all the remaining buildings, respectively. Blue bars refer to the results of binary damage detection and orange ones represent the results of 5-class damage quantification tasks. We use the dotted line to represent the damage prediction accuracy of directly training on the target domain as reference (ideal case). Our framework achieves up to $42.77\%$ improvement on damage detection and $51.36\%$ improvement on damage quantification compared to the benchmark methods other than directly training on the target domain. Figure~\ref{fig:result3} presents the confusion matrix of damage quantification results for transferring knowledge from all the remaining buildings to the $8$-story and the $12$-story buildings. Since we do not have class 5 (collapse) for 12-story building data, we mark it as $1$ in the confusion matrix. \textcolor{black}{Figure~\ref{fig:result3} shows that the damage quantification accuracy achieves $76\%$ and $84.47\%$ for knowledge transfer from the 2,4,12,20-story buildings to the $8$-story building and from the 2,4,8,20-story buildings to the $12$-story buildings, respectively. Besides, the $\pm 1$ damage quantification accuracy achieves $97.4\%$ and $100\%$ for the $8$ and $12$-story buildings, respectively. In Figure 8, it could be found that the 12-story building does not have class 5, referring to the damage level of “collapse”. The ground motion used in simulation is collected from real-world earthquakes. We mainly utilized the data collected under service level, design level earthquake, and maximum considered earthquake, covering the most common earthquake excitations. Therefore, it is possible that some buildings did not collapse. Meanwhile, the most challenging part of post-earthquake structural damage diagnosis is distinguishing slight, moderate and severe damage states to help post-earthquake rescue and effective reconstruction. } Figure~\ref{fig:12_Story_floor} shows the performance of our framework to transfer knowledge from other buildings' $2$nd story, $6$th story and the roof story to the $2$nd story, $6$th story and the roof story of the 12-story building to compare the performance of knowledge transfer in different stories. The results on damage detection, damage quantification and $\pm 1$ damage quantification are presented.

 From Figures~\ref{fig:result2}, \ref{fig:result3} and \ref{fig:12_Story_floor}, we have 4 observations based on the results: 1) ``no damage" vs ``slight damage" data points, or ``moderate damage" vs ``severe damage" are difficult to classify. 2) Except transferring to the 2-story building, our method can achieve comparable performance or outperform directly training on the target domain given labeled target domain data. This is because that for some buildings with very few data, the information is too limited for directly training and testing on the target domain even if the label is given, which makes the model easily overfit and reduce the prediction performance.  This also shows that our framework is good at integrating the information from different source domains to improve the knowledge transfer. 
3) \textcolor{black}{For most buildings, multiple source domain adaptation methods (Our PhyMDAN, MDAN) outperform single domain adaptation methods (C-DANN, B-DANN). Compared to baseline like C-DANN and B-DANN, multiple source domain adaptation methods can effectively integrate and transfer knowledge from multiple other buildings with complex distribution shifts. In particular, by embedding the prior physical knowledge, our framework achieves the most effective integration of knowledge from various source domains. Deep neural network architectures also help improve the performance with higher expressive power. So our PhyMDAN and MDAN perform better than 2SW-MDA, and B-DANN performs better than JDA. }  4) Small size of unlabeled target domain data would constrain the performance of knowledge transfer. \textcolor{black}{For example, the damage diagnosis accuracy of different algorithms is lower on the 4-story and the 20-story building compared to other buildings. One possible reason is that the sample size of target data is relatively smaller due to limited data availability. There are only 750 samples for each story of the 20-story building after data augmentation.} Having few inputs from the target domain makes it difficult for the feature extractor to sufficiently learn the underlying marginal distributions and conduct domain-invariant transformation.

\subsection{Knowledge Transfer Across Different Buildings From Simulation to Real-world Diagnosis}\label{sec:realdata}

\begin{figure}[ht!]
    \centering 
    \includegraphics*[scale=0.07]{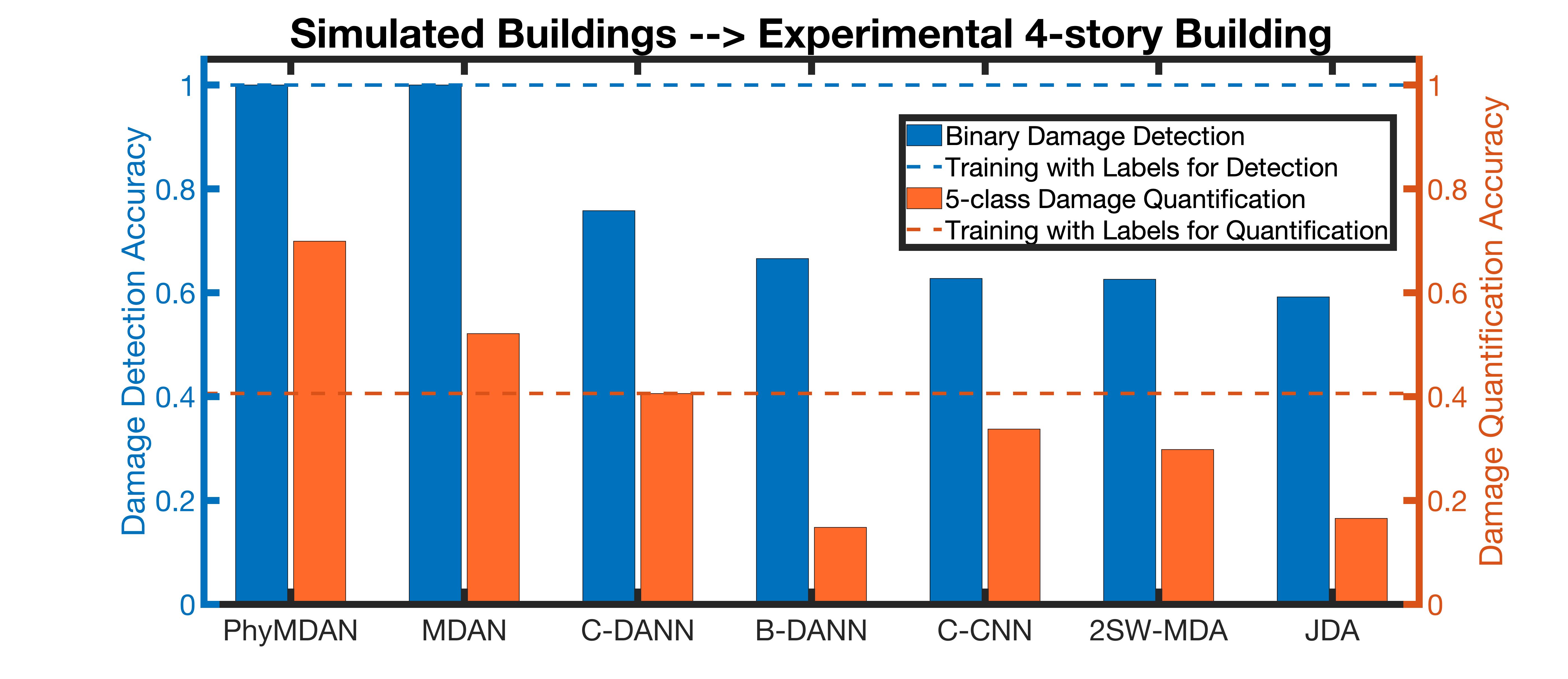}
    \caption{ This figure compares the domain adaptation performance between our method and other approaches on binary damage detection (blue) and 5-class damage quantification tasks (yellow) on a experimental 4-story building. We use the dotted line to represent the damage prediction accuracy of directly training on the target domain as reference.}
     \label{fig:real_Story}
\end{figure}

Compared to the simulated non-linear building model, experimental structures often have more complex non-linear load-deformation patterns. However, real-world seismic structural response data is often difficult to acquire. Here we validate the application to transfer the knowledge from the simulation data to the experimental structure for earthquake-induced structural damage diagnosis.

\begin{figure}[ht!]
    \centering 
    \includegraphics*[scale=0.4]{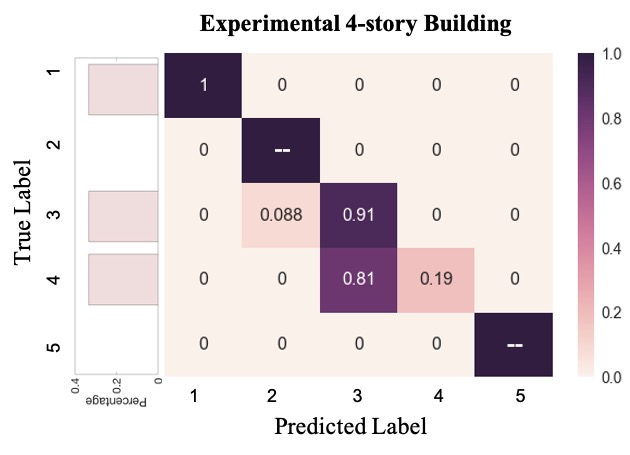}
    \caption{ Confusion matrix of the 5-class damage quantification result for knowledge transfer from simulation data to a experimental 4-story building. The left of the confusion map shows the density histogram of each damage class.} \label{fig:real_confusion}
\end{figure}

For the data preprocessing, we follow the same steps described for simulation data in Section~\ref{sec:simulate}. The difference is that we have more dense sliding window with small striding to augment the data, since the size of experimental building data is very limited. \textcolor{black}{Data augmentation results in $1500$ labeled data points for each story of the experimental building.} Note the damage class distribution is very imbalanced as Figure~\ref{fig:real_confusion} shows, which is common in experimental data.  We use the same architecture as Table~\ref{table: Architecture} for damage quantification. For damage detection, a model with smaller sizes of neural networks compared to the one for simulation data is employed. We use Adam with the momentum of $0.9$ and the weight decay rate of $1e-4$ as optimizer~\cite{kingma2014adam}. For damage quantification, we found the performance achieves the best when we set $\lambda =0.5\sim 1$. \textcolor{black}{The selection of $\lambda$ is the same with Section~\ref{sec:simulate}}.  We take $\lambda$ as $0.01\sim 0.05$ for damage detection. The initial learning rate is set as $0.005$ for damage quantification and $0.0002$ for damage detection with learning decay rate of $0.1$. The selection of hyper-parameters is based on our empirical study. \textcolor{black}{The partition of training and validation set is the same as the way in Section~\ref{sec:simulate}.}.

As a result, our framework can achieve $100\%$ damage detection accuracy and $69.93\%$ 5-class damage quantification accuracy, as shown in Figure~\ref{fig:real_Story}. In the task of damage detection, our framework has the same performance as MDAN, and outperforms other methods, which validates that multiple source domain adaptation methods have unique advantages on fusing the information from different source domains. On the more difficult task of damage quantification, our method outperforms all other methods, which shows the importance of discovering and combining the physical knowledge about the domains into the loss design.  Figure~\ref{fig:real_confusion} shows the confusion matrix for the damage quantification result, which indicates a $100\%$ $\pm 1$ damage quantification accuracy. Since we do not have class 2 (slight damage) and class 5 (collapse), we mark them as $1$ in the confusion matrix. Interestingly, the source dataset contains data points indicating slight damage and collapse, but the well-trained model only misclassifies a small group of data points belonging to moderate damage into slight damage, and avoids to classify any points into collapse.

\subsection{Characterizing the Performance of Different Physics-informed Weights}\label{sec:weight}
\textcolor{black}{We further explore the influences of different physics-informed weights on the final performance. The available quantified physical properties for simulation data includes static overstrength factor ($\Omega_s$), ductility factor ($\mu_T$), first-mode periods ($T_1$), height ($H$), as well as the number of stories ($N$). Note that in post-earthquake scenarios, $\Omega_s$ and $\mu_T$ are often unavailable in practice, but they are analyzed here to provide an understanding of how different physics-informed weights would impact the final knowledge transfer performance. }

\textcolor{black}{Table~\ref{tb:2} summarizes the physical characteristics of five different archetype steel frame buildings in simulation data, and more details of the parameter setting can be found in~\cite{elkady2015effect}. The overstrengh factor ($\mathbf{\Omega}_s$) refers to the ratio of the maximum base shear strength to the code-design base shear. The structural overstrength plays a vital role in protecting buildings from collapse, and is a result from many building physical properties such as material strength, confinement effect, member geometry and so on. The ductility factor ($\mu_t$) helps evaluate and quantify the global strength deterioration of the buildings. This factor is defined as a ratio of the global roof drift corresponding to a 20\% drop in the maximum base shear force to global yield roof drift~\cite{elkady2015effect}. The first-mode period is obtained based on eigenvalue analysis and is the fundamental property of the archetype buildings. More analysis on how these physical parameters reflect the structural dynamic responses to earthquakes can be found in~\cite{elkady2015effect, prestandard2000commentary, kircher2012tentative, ji2009effect,foutch2002modeling}.}

\textcolor{black}{
Given these building parameters, the physics-informed weights are further calculated based on Equation~\ref{eq:w1} and~\ref{eq:w2}. With Softmax normalization, the weights are obtained based on similarity of each physical parameter, separately. We also define combined weights by taking the average of normalized weights of individual physical parameters. As an example, we present the final weight distribution across four source domains when transferring knowledge from the $2, 4, 8,$ and $20$-story source buildings to the $12$-story building. In Figure~\ref{fig:12a}, it is shown that the heights, number of stories, first-mode periods, and combinations of heights and first-mode periods tend to weigh the $8$-story building the most, while based on the other physical properties or combinations of physical properties, the $20$-story building results in the highest weight. Meanwhile, as the height of building is proportional to the number of stories in our archetype buildings, the height-based weight distribution is almost the same with the weight distribution based on the number of stories. We only evaluate the combination of heights and other influence factors.   }

\begin{table*}
\centering
 \caption{Characteristics for archetype steel buildings in simulation data}\vspace{-0.2cm}
\label{tb:2}
\begin{tabular}{c|cccc}
\hline\hline
No. of stories & Static Overstrength&Ductility&First-mode Period&Height\\
$N$&$\mathbf{\Omega}_s$&$\mu_T$&$T_1$[s]&$H$[m]\\
\hline
 2&2.98&4.10&0.88&8.6\\
 4&1.75&4.60&1.51&16.6\\
 8&2.63&3.30&2.00&32.6\\
 12&2.09&2.70&2.70&48.6\\
 20&1.89&2.61&3.44&80.6\\
 \hline\hline
\end{tabular}
 \vspace{-0.3cm}
\end{table*}

\begin{figure}[h!]
\vspace{-0.5cm}
\begin{center}
	\begin{subfigure}{0.434\textwidth}
		\includegraphics[scale=0.0485]{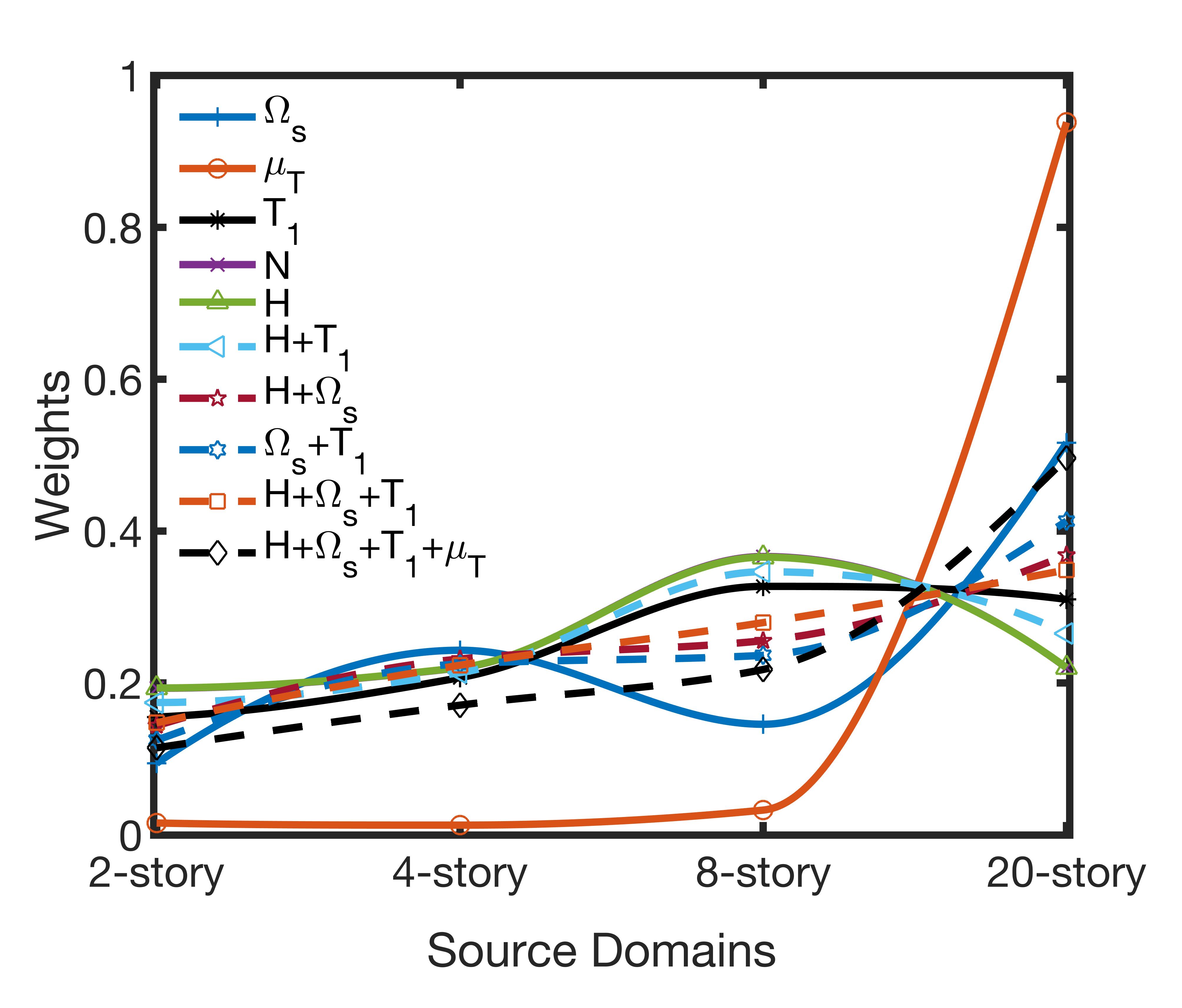}
		\caption{}
		\label{fig:12a}
	\end{subfigure}
	\begin{subfigure}{0.56\textwidth}
		\includegraphics[scale=0.07]{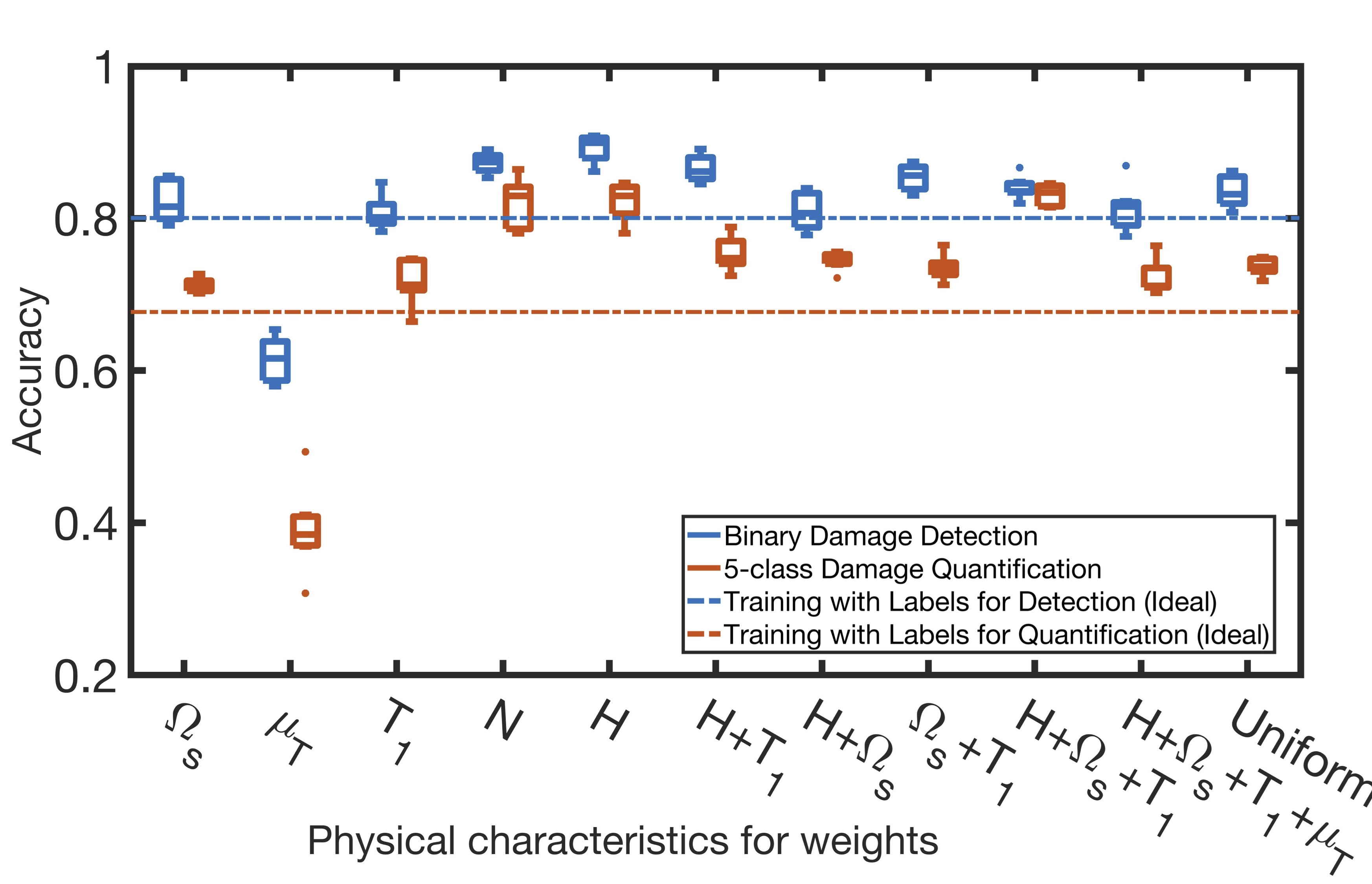}
		\caption{}
		\label{fig:12b}
	\end{subfigure}
	\caption{(a) Different physics-informed weights distribution across four source buildings calculated using different combinations of physical properties. (b) Accuracy of binary damage detection and 5-class damage quantification using different combinations of physical properties to calculate the physics-informed weights. The results are based on knowledge transfer from the 2,4,8,20-story buildings to the 12-story building in simulation data. ``Uniform" refers to the uniform weights, which is the baseline method MDAN. Blue bar refers to the results of damage detection. Orange bar refers to that of damage quantification. Blue and Orange dot line refer to the accuracy of directly training and testing on labeled target data.}
	\label{fig:weight_Selection}
\end{center}
\vspace{-0.5cm}
\end{figure}
\textcolor{black}{Based on different weights, we fine-tune the model and run the experiments multiple times to obtain accuracy of damage detection and quantification for transferring from the $2, 4, 8,$ and $20$-story source buildings to  the $20$-story building. Figure~\ref{fig:12b} shows the final performance using different physics-informed weights. PhyMDAN achieves the best performance by using the weights based on height ($H$), number of stories ($N$), and combinations of height, static overstrength factor  and first-mode periods ($H + \mathbf{\Omega}_s +T_1$), on both damage detection and damage quantification. Meanwhile, our PhyMDAN outperforms uniform weights (MDAN) when using the weights based on first-mode period, height, number of stories, combination of height and first mode period, combination of height and static overstrength factor, and combination of height, static overstrength factor, and first-mode period.  In the  weight distributions with best performance (based on $H$, $N$, and $H + \mathbf{\Omega}_s +T_1$), (1) the $2$-story source building always results in the lowest weight, indicating the least similarity with the $12$-story target building, (2) the $8$-story and $20$-story source buildings result in relatively higher weights as they share similar physical properties with the target building. However, the ductility factor-based weight has very low accuracy in both damage detection and quantification tasks. This is because ductility factor assigns a very high weight, $0.93$, to the $20$-story source building, making the process equivalent to single-source adversarial domain adaptation from the $20$-story building to the $12$-story building. As the $20$-story building has very few data samples ($750$ for each story), the final accuracy is lower than using other types of physics-informed weights that have less skewed distribution. Therefore, when selecting weights, we may need to avoid extremely unbalanced weight distributions. The results also suggest that the weights based on only ductility factors are not sufficient to estimate the similarities between different buildings.}

\textcolor{black}{In summary, the heights, number of stories, combination of heights and first mode periods, and combination of heights, static overstrength factor and first-mode periods always outperform than uniform weights and directly training and testing on the target building. This evidence shows that physics-informed weights provide reasonable hints to better integrate the information from multiple source domains and enhance the final knowledge transfer performance. }

\subsection{Characterizing the Training Process and Effect of $\lambda$}\label{sec:hyperparam}
\begin{table*}
\small
\centering
 \caption{$\lambda$ versus Damage Quantification Accuracy (2,4,8,20-story buildings $\rightarrow$ 12-story building)}
\label{tb:3}
\begin{tabular}{c|cccccc}
\hline\hline
$\lambda$ & 0.01&0.1&0.3&0.5&0.7&1\\
\hline
Domain Validation Acc (Target) &0.5886&0.8743&0.8343&0.7771&0.8171&1.00\\
 Damage Validation Acc (Source)&0.7834&0.8910&0.8013&
 0.8828&0.7655&0.8703\\
 Damage Test Acc (Target) &0.6177&0.6606&0.8240&\textbf{0.8447}&0.7903&0.6486\\
 \hline\hline
\end{tabular}
\end{table*}

\begin{figure}[ht!]
    \centering 
    \includegraphics*[scale=0.15]{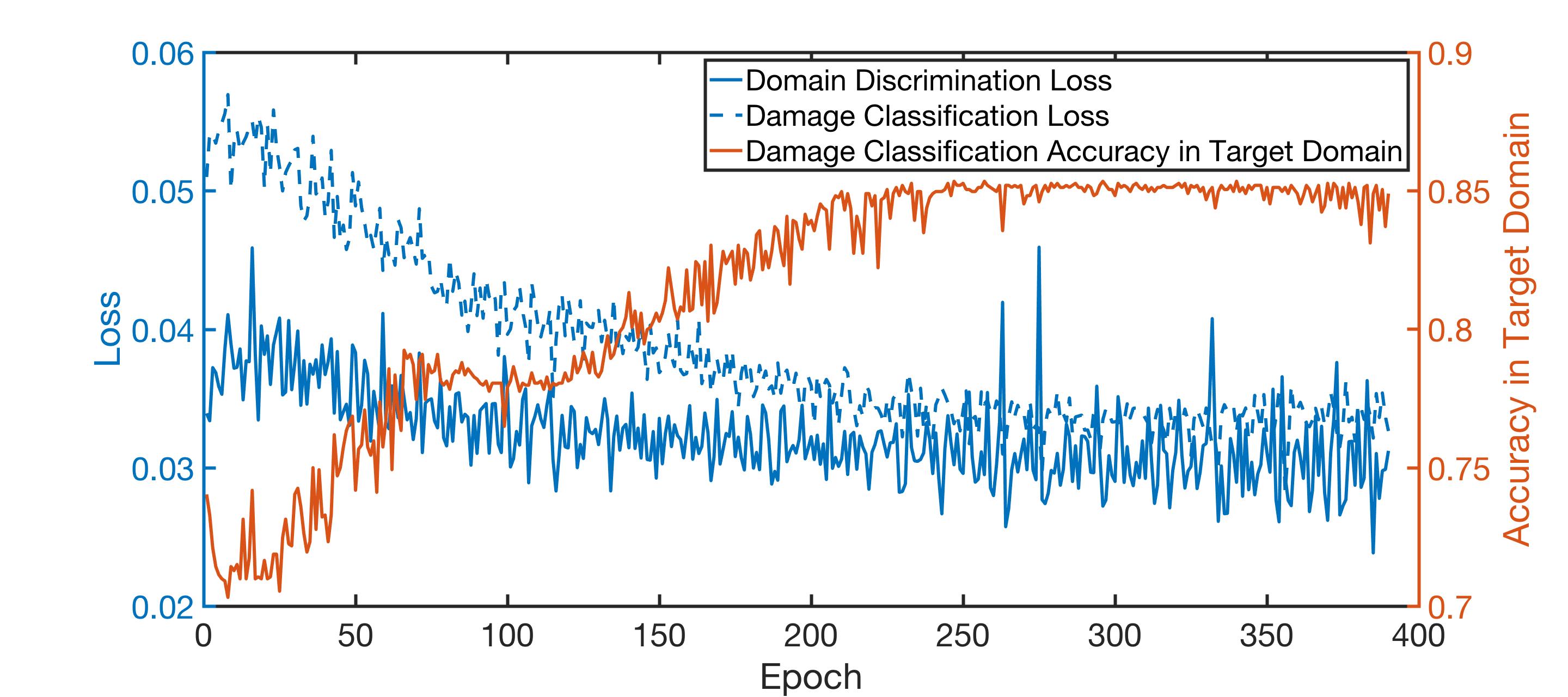}
    \caption{ The loss of domain discrimination loss, the loss of damage classification on the source domains, and the accuracy of damage classification on the target domain change with training epochs. }\label{fig:loss_visual}
\end{figure}
In this section, we characterize the training process, to provide insights and experiences about the optimization of our framework. A key problem for adversarial framework is how to find a stationary saddle point in minimax/maximin optimization. For our framework, this problem is more critical and difficult to resolve, because our framework combines three neural network architectures, which makes it more difficult to guarantee a stationary saddle point. Here we visualize the changes of the damage classification loss  $ \dfrac{1}{n}\sum_{i=1}^n w_i \pazocal{L}^i_M(\theta_e, \theta_m)$ and domain discrimination loss $ \dfrac{1}{n}\sum_{i=1}^n w_i \pazocal{L}_{D_i}(\theta_e, \theta_{d_i})$ in Figure~\ref{fig:loss_visual}. The figure shows that both loss keep fluctuating during the training epochs due to adversarial training scheme. The damage prediction accuracy on the target domain varies a lot at the early training stage, and finally converge a stable point. A basic insight to stabilize the training is that, in architecture design, the sub-classifiers in domain discriminator should be kept as light-weight nets but sufficiently powerful to conduct binary classification on the extracted features. However, sometimes it would be difficult to find a saddle point for the training framework. How to stabilize the training of adversarial frameworks needs to be resolved  in the future work. 
\begin{figure}[ht!]
    \centering 
    \includegraphics*[scale=0.3]{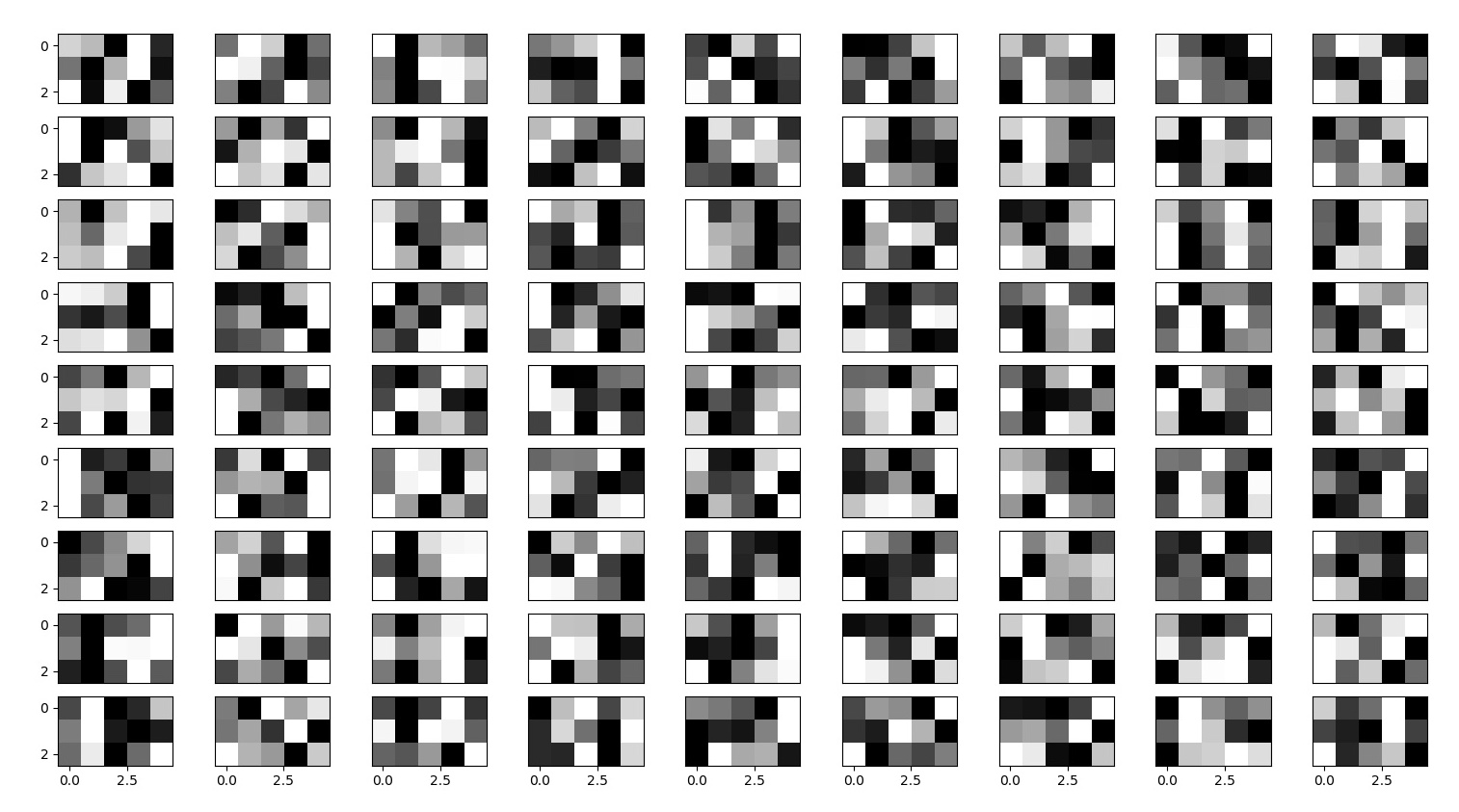}
    \caption{Visualized kernel for the first convolutional layer of the learned feature extractor described in the Table~\ref{table: Architecture}. \textcolor{black}{The feature extractor is learned for transferring knowledge from simulated buildings to the experimental building.}  There are 3 groups of 81 kernels with size of $5\times 1$. The 3 groups focus on extracting information from floor response frequency, ceiling response frequency, and ground motion frequency, which is ordered from top to bottom in the figure.}
    \label{fig:interpretable}
\end{figure}

The tuning of parameter $\lambda$ is another key for the network optimization. $\lambda$ controls the trade-off between the discriminativeness and domain-invariance of the extracted features. In our experiment, we found that different architectures have different optimal $\lambda$. If the architecture of damage predictor is more complex,  or if we have limited data or more noisy data, we need higher $\lambda$ to ensure the domain-invariance of the extracted features. \textcolor{black}{Parameter tuning for $\lambda$ has three rules of thumb: 1) the discriminator loss, predictor loss, and total loss need to converge, if it does not converge, we need to change $\lambda$ and tune the learning rate correspondingly; 2) training accuracy of domain discriminator need to converge to 100\%, 3) a good $\lambda$ can better balance domain discriminator and damage predictor. In Table 3, we present the domain validation accuracy on target validation set (175 samples), damage prediction validation accuracy on source validation set (725 samples), and damage prediction test accuracy on target test set (1750 samples). When the domain discrimination accuracy is relatively low and damage prediction accuracy on the source validation data is relatively high, the damage prediction accuracy on the target test data achieves the highest value. This means that an optimal trade-off is approximated -- the domain discriminator cannot distinguish the differences between the source and target domain accurately, and damage predictor can predict damage states based on extracted features accurately on source data. This trade-off agrees with our intuition to find a domain-invariant and damage-discriminative feature space. }

We also visualize the kernel in the first convolutional layer of learned feature extractor. In Figure~\ref{fig:interpretable} shows, it shows that there are 3 groups of 81 kernels with size of $5\times 1$. The 3 groups focus on extracting information from floor response frequency, ceiling response frequency, and ground motion frequency. We can find that most kernels are active, which means our network parameters are not redundant. It can be found that in some groups, the kernels focus on extracting information in the same frequency band, while in some other groups, the kernels for floor and ceiling responses tend to focus on opposite frequency band to the kernel for ground motion information.

\section{Conclusion}\label{sec:con}

In this work, we introduce a new modeling framework to adapt and transfer the knowledge learned from different buildings to diagnose the earthquake-induced structural damages of another building without any labeled data. Our adversarial domain adaptation approach extracts domain-invariant and damage-discriminative feature representations of data from different buildings. To the best of our knowledge, this is the first framework to address the multiple source domain adaptation challenge in post-earthquake building damage diagnosis without any labels of the target building. This end-to-end framework integrates data augmentation, input feature extraction, and domain adaptation for damage detection and damage quantification. Besides, this framework is flexible to combine as much available information in historical datasets from other different buildings as possible to help diagnose the current building, which ensures its application practicalities. In this framework, we design a new physics-guided loss function based on fuzzy physical knowledge about buildings to eliminate the uncertainties introduced by those source buildings with less physical similarities to the target building. We prove that this new loss provides a tighter upper bound the damage prediction risk on the target building. 

We evaluate our framework using both simulation data and experimental data, including $5$ different buildings under $40$ earthquakes for simulation and an experimental 4-story building subjected to incremental dynamic analysis. Our method achieves upto $90.13\%$ damage detection accuracy and $84.47\%$ damage quantification accuracy on simulation data. We also successfully transfer the knowledge learned from  simulation data to experimental data with $100\%$ damage detection accuracy and $69.93\%$ damage quantification accuracy, which outperforms other state-of-the-art benchmark  methods. The theoretical analysis and experimental results show the potential of our physics-guided adversarial domain adaptation framework to sufficiently distill and transfer the knowledge of seismic building damage patterns for building damage diagnosis, which is important for enabling fast and accurate post-earthquake building damage estimation. 

\newpage
\bibliographystyle{elsarticle-num}
\bibliography{reference}
\end{document}